\newtheorem{theorem}{Theorem}%[chapter]
\newtheorem{corollary}{Corollary}%[chapter]
\newtheorem{definition}{Definition}%[chapter]
\newtheorem{example}{Example}%[chapter]
\newtheorem{lemma}{Lemma}%[chapter]
\newtheorem{proposition}{Proposition}%[chapter]
\newtheorem{remark}{Remark}%[chapter]
\newenvironment{proof}[1][Proof]{\emph{#1.} }{\  \hfill $\square $ \vspace{5 pt}}
\newcommand*\samethanks[1][\value{footnote}]{\footnotemark[#1]}
\begin{document}

\title{Lattice structure of the random stable set in many-to-many matching markets \thanks{We would like to thank Agustin Bonifacio, Jordi Mass\'o, Elena Iñarra, Juan Pereyra and the Game Theory Group of IMASL for the helpful discussions and  detailed comments. Our work is partially supported by the UNSL, through grant 319502, and from the Consejo Nacional
de Investigaciones Cient\'{\i}ficas y T\'{e}cnicas (CONICET), through grant
PIP 112-201501-00505, and from Agencia Nacional de Promoción Cient\'ifica y Tecnológica, through grant PICT 2017-2355. }}

%\titlerunning{Short form of title}        % if too long for running head

\author{Noelia Juárez\thanks{Instituto de Matem\'{a}tica Aplicada San Luis, Universidad Nacional de San
Luis and CONICET, San Luis, Argentina. Emails: \texttt{noemjuarez@gmail.com} (N. Juárez), \texttt{pabloneme08@gmail.com} (P.A. Neme), and \texttt{joviedo@unsl.edu.ar} (J. Oviedo).} \and Pablo A. Neme\samethanks[2] \and Jorge Oviedo\samethanks[2]}
%\authorrunning{Short form of author list} % if too long for running head

\date{\today}
\maketitle

\begin{abstract}

For a many-to-many matching market, we study the lattice structure of the set of random stable matchings. We define a partial order on the random stable set and present two intuitive binary operations to compute the least upper bound and the greatest lower bound for each side of the matching market. Then, we prove that with these binary operations the set of random stable matchings forms two dual lattices.
\bigskip

\noindent \emph{JEL classification:} C71, C78, D49.\bigskip

\noindent \emph{Keywords:} Lattice Structure, Random Stable Matching markets, Many-to-many Matching Markets.\bigskip
\end{abstract}

\section{Introduction}\label{intro}

Matchings have been studied for several decades, beginning with Gale and Shapley's pioneering paper \citep{gale1962college}. They introduce the notion of stable matchings for a marriage market and provide an algorithm for finding them. Since then, a considerable amount of work was carried out on both theory and applications of stable matchings. 
A matching is \textit{stable} if all agents have acceptable partners and there is no pair of agents, one of each side of the market,  that would prefer to be matched to each other rather than to remain with the current partner. Unfortunately, the set of many-to-one stable matchings may
be empty. Substitutability is the weakest condition that has so far been imposed on agents’ preferences under which the existence of stable matchings is guaranteed. An agent has \textit{substitutable preference} if he wants to continue being a partner with agents from the other side of the market even if other agents become unavailable \citep[see][for more detail]{kelso1982job,roth1984evolution}.

One of the most important results in the matching literature is that the set of stable matchings has a dual lattice structure. This is important for at least two reasons. First, it shows that even if agents of the same side of the market compete for agents of the other side, the conflict is attenuated since, on the set of stable matchings, agents on the same side of the market have coincidence of interests. Second, many algorithms are based on this lattice structure. For example, algorithms that yield stable matchings in centralized markets. In this paper, we study the lattice structure of the random stable set for a general matching market, many-to-many matching markets with substitutable preferences and satisfiying the \textit{law of aggregated demand} (L.A.D.). Random stable matchings are very useful for at least two reasons. First, the randomization allows for a much richer space of possible outcomes and may be essential to achieve fairness and anonymity. Second, the framework of random stable matchings admits fractional matchings that capture time-sharing arrangements, \citep[see][among others]{rothblum1992characterization,roth1993stable,teo1998geometry,sethuraman2006many,baiou2000stable,dougan2016efficiency,neme2019characterization,neme2019many}. 

\cite{roth1993stable} define binary operations to compute the  \textit{least upper bound (l.u.b.)} and the \textit{greatest lower bound (g.l.b.)} for random stable matchings for the marriage market. To do so, they use the first-order stochastic dominance as the partial order for random stable matchings. This partial order, can not be applied when agents' preferences are over subsets of agents of the other side of the market in a substitutable manner. For this reason, we present a new partial order --a natural extension of the first-order stochastic dominance-- for the random stable set of a matching market when agents' preferences are subsitutables and satisfies the L.A.D.. Generally, a random stable matching can be represented by different lotteries. Despite this, we prove that there is a unique way to represent a random stable matching fulfilling a special property: each stable matching involve in the lottery of this unique representation is comparable in the eyes of all firms, from now on we refer as \textit{decreasing representation}. This way, our partial order  is independently of the representations of the random stable matching. The process to construct this decreasing representation for each random stable matchings, its presented as Algorithm 1.

To ease the definition of the binary operations and proofs, we present the \textit{splitting procedure}. Given two random stable matchings, this procedure ``splits'' the decreasing representation of each random stable matching in a way that both lotteries have the same numbers of terms. Moreover, both lotteries have the same scalars, term to term.  The splitting procedure is formalized by Algorithm 2 presented in Appendix \ref{apendice B}. 

Our main contribution in this paper is that, by defining two natural binary operations (pointing functions) that compute the \textit{l.u.b.} and \textit{g.l.b.} for random stable matchings, the set of these matchings has a dual lattice structure. In other words, as long as the set of (deterministic) stable matchings  has a lattice structure, where the binary operations are computed via pointing functions, the set of random stable matchings also has a lattice structure.
%for a many-to-many matching market with substitutable preference satisfying the L.A.D.,

Further, for the special case in which all the scalars of the lottery are rational numbers, we show that there is a direct way to compute the \textit{l.u.b.} and the \textit{g.l.b.}

Other finding derived from our proofs, is a version of Rural Hospital Theorem for random stable matchings. The Rural Hospital Theorem (for deterministic stable matchings) for a many-to-many matching market where all agents have substitutable preference satisfying the L.A.D. is presented in \cite{alkan2002class}. 
The paper illustrates the successive results needed to prove that the random sable set has a lattice structure with numeric examples.

\subsection*{Related literature}
The lattice structure of the set of stable matchings is introduced by \cite{knuth1976marriages} for the marriage market. Given two stable matchings he defines the \textit{l.u.b.} for men, by matching to each man with the best of the two partners, and the \textit{g.l.b.} for men, by matching to each man the less preferred between the two partners; these are usually called the \textit{pointing functions} relative to a partial order.  \cite{roth1985college} shows that these binary operations (pointing functions) used in \cite{knuth1976marriages} do not work in the more general many-to-many and many-to-one matching markets introduced by \cite{kelso1982job} and \cite{roth1984evolution} respectively even under substitutable preferences. For a specific many-to-one matching market, the so-called the college admission problem, \cite{roth1992two} present a natural extension of Knuth's result for $q$-responsive preferences. \cite{marti2001lattice} further extend the results proved by \cite{roth1992two}. They identified a weaker condition than $q$-responsiveness, called $q$-separability, and propose two natural binary operations that give a dual lattice structure to the set of stable matchings in a many-to-one matching market with substitutable and $q$-separable preferences. Such binary operations are similar to the Knuth’s ones. \cite{risma2015binary} generalizes the result of \cite{marti2001lattice} by showing that their binary operations work well in many-to-one matching markets where the preferences of the agents satisfy substitutability and the \textit{law of aggregate demand} (a less restrictive than $q$-separability). Her paper is contextualized in many-to-one matching markets with contracts. 
 \cite{manasero2019binary} extends the result in \cite{risma2015binary} to the many-to-many marching market, where one side has substitutable preferences satisfying the law of aggregate demand, and the other side has $q$-responsive preferences. \cite{alkan2002class}  considers a market with multiple partners on both sides. For this market, preferences are given by rather general \textit{path-independent choice functions} that do not necessarily respect any ordering on individuals and satisfy the law of aggregated demand.\footnote{\cite{alkan2002class} calls ``the law of aggregated demand'' as ``cardinal monotonicity''.} He shows that the set of stable matchings in any two-sided market with path-independent  choice functions and preferences satisfying the law of aggregated demand has a lattice structure under the common preferences of all agents on any side of the market.  \cite{li2014new}  presents an alternative proof for Alkan's result. The main distinction between \cite{li2014new} and \cite{alkan2002class} lies in the conditions over preferences: \cite{li2014new} assumes agents with complete preferences, whereas \cite{alkan2002class} assumes agents with incomplete revealed preferences. All of these mentioned papers share natural definitions of the binary operations via pointing functions. 
 
In other direction, there is an extensive literature that proves that the set of stable matchings has a lattice structure \citep[see][among others]{blair1988lattice,adachi2000characterization,fleiner2003fixed,echenique2004core,echenique2004theory,hatfield2005matching,ostrovsky2008stability,wu2018lattice}. All of these mentioned papers define in a difficult way, by means of fixed points, the \textit{l.u.b} and \textit{g.l.b.}. That is, these papers do not compute the binary operations via pointing functions.

Regarding to the related literature concerning to lattice structures of random stable sets, \cite{roth1993stable}, define two binary operations for random stable matchings in marriage markets. For these very particular markets, they proved that the set of random stable matchings,\footnote{They prove that the ``stable fractional matching set'' coincides with random stable matching set} endowed with a partial order (first-order stochastic dominance) has a lattice structure. \cite{neme2019characterization} proved that the strongly stable fractional matching set in the marriage market, endowed with the same partial order (first-order stochastic dominance), has a lattice structure. The binary operations defined in \cite{roth1993stable} and also used by \cite{neme2019characterization}, can not be extended to a more general markets, not even to the college admission problem with $q$-responsive preferences. 

The paper is organized as follows. In Section \ref{prelimirany}, we introduce the matching market and preliminary results. In Section \ref{seccion algoritmo}, we prove that there is a unique way to represent a random stable matching with a decreasing property (Algorithm 1 and Theorem \ref{teorema del orden}). Also, we present a version of the Rural Hospital Theorem for random stable matchings (Proposition \ref{hospital rural para random}). In section \ref{order}, we present a partial order for random matchings when agents' preferences are substitutalbes and satisfies L.A.D. (Proposition \ref{Es orden parcial}). Also, we describe the splitting procedure that is formalized latter in Appendix \ref{apendice B}. In Section \ref{seccion main result}, we define the binary operations, and we prove that these natural binary operations computes the \textit{l.u.b.} and \textit{g.l.b.} for each side of the market (Proposition \ref{teorema de operaciones binarias}). Further,  the main result of the paper is presented, and  states that the random stable set has a dual lattice structure (Theorem \ref{teorema principal}). In subsection \ref{rational}, we show how to compute in a direct way the \textit{l.u.b.} and \textit{g.l.b.} for rational random stable matchings (these are random stable matchings where all scalars of their lotteries are rational numbers)(Corollary \ref{corolario para rational}). Section \ref{conclusiones} contains concluding remarks. Finally, Appendix \ref{apendice A} contains proofs for the decreasing representation and Appendix \ref{apendice B} contains proofs of the partial order, formalization of the splitting procedure (Algorithm 2), and the proof of the main theorem.

\section{Preliminaries}\label{prelimirany}
We consider many-to-many matching markets, where there are two disjoint sets of agents, the set of \textit{firms} $F$ and the set of \textit{workers} $W$. Each firm has an antisymmetric, transitive and complete preference relation ($>_f$) over the set of all subsets of $W$.  In the same way, each worker has an antisymmetric, transitive and complete preference relation ($>_w$) over the set of all subsets of $F$. We denote by $P$ the preferences profile for all agents: firms and workers. A many-to-many matching market is denoted by $(F,W,P).$ 
Given a set of firms $S\subseteq F$, each worker $w\in W$ can determine which subset of $S$ would most prefer to hire. We call this the $w$'s choice set from $S$ and denote it by $Ch(S,>_w)$. Formally,
$$
Ch(S,>_w)=max_{>_w}\{T:T\subseteq S\}.
$$
Symmetrically, given a set of workers $S\subseteq W$, let $Ch(S,>_f)$ denote firm $f$'s most preferred subset of $S$ according to its preference relation $>_f$. Formally,
$$
Ch(S,>_f)=max_{>_f}\{T:T\subseteq S\}.
$$

\begin{definition}
A \textbf{matching} $\mu$ is a function from the set $F\cup W$ into $2^{F\cup W}$ such that for each $w\in W$ and for each $f\in F$:
\begin{enumerate}[1.]
\item $\mu(w)\subseteq F$;
\item $\mu(f)\subseteq W$;
\item $w\in \mu(f)\Leftrightarrow f\in \mu(w)$
\end{enumerate}
\end{definition}
We say that agent $a\in F\cup W$ is matched if $\mu(a) \neq \emptyset$, otherwise he is unmatched.

A matching $\mu$ is blocked by agent a if $\mu(a)\neq Ch(\mu(a),>_a)$. We say that a matching is individually rational if it is not blocked by any individual agent. A matching $\mu$ is blocked by a worker-firm pair $(w,f)$ if $w \notin \mu( f ), w \in Ch(\mu( f )\cup \{w\},>_f),$
and $f \in Ch(\mu( w )\cup \{f\},>_w)$. A matching $\mu$ is \textbf{stable} if it is not blocked by any individual agent or any worker-firm pair. The set of stable matchings is denoted by $\boldsymbol{\mathcal{S(P)}}.$ Further, a \textbf{random stable matching} is a lottery over stable matchings, and denote by $\boldsymbol{\mathcal{RS(P)}}$ the random stable set for the many-to-many matching market $(F,W,P)$. 

Given an agent $a$'s preference relation ($>_a$) and two stable matchings $\mu$ and  $\mu'$, let $\mu(a) \geq_a \mu'(a)$ denote $\mu(a)=Ch(\mu(a) \cup \mu'(a),>_a)$. We say that $\mu(a) >_a \mu'(a)$ if $\mu(a) \geq_a \mu'(a)$ and $\mu(a) \neq \mu'(a)$. Given a preferences profile $P$, and two stable matchings $\mu$ and  $\mu'$, let  $\mu >_F \mu'$ denote the case in which all firms like $\mu$ at least as well as $\mu'$, with at least one firm preferring $\mu$ to $\mu'$ outright. Let $\mu \geq_F \mu'$ denote that either $\mu >_F \mu'$ or that $\mu= \mu'.$ Similarly, we define $>_W$ and $\geq_W$. Notice that, $\geq_F$ and  $\geq_W$ are partial orders over the set of stable matchings.

An agent $a$'s preferences relation satisfies \textbf{substitutability} if, for any subset $S$ of the opposite set (for instance, if $a\in F$ then $S\subseteq W$) that contains agent $b$, $b\in Ch(S,>_a)$ implies $b\in Ch(S'\cup \{b\},>_a)$ for all $S' \subseteq S.$ We say that an agent $a$'s preference relation ($>_a$) satisfies the\textbf{ law of aggregated demand (L.A.D.)} if for all subset $S$ of the opposite set and all $S'\subseteq S$, $|Ch(S',>_a)|\leq |Ch(S,>_a)|.$ \footnote{$|S|$ denotes the number of agents in $S$.}

For a matching market $(F,W,P)$ where the preference relation of each agent satisfies substitutability and the LAD, \cite{alkan2002class}\footnote{\cite{li2014new} present an alternative proof for Alkan's result, \cite{li2014new} assumes agents with complete preferences, whereas \cite{alkan2002class} assumes agents with incomplete preferences.} proves that the set of stable matchings has a lattice structure.  Given two stable matchings $\mu_1$ and $\mu_2$, \textit{l.u.b.} for firms is denoted by $\boldsymbol{\mu_1 \vee_F \mu_2}$ and  \textit{g.l.b.} for firms is denoted by $\boldsymbol{\mu_1 \wedge_F \mu_2}$. Similarly, \textit{l.u.b.} for workers is denoted by $\boldsymbol{\mu_1 \vee_W \mu_2}$ and \textit{g.l.b.} for workers is denoted by $\boldsymbol{\mu_1 \wedge_W \mu_2}$. The binary operations are defined as follows, \citep[see][among others]{alkan2002class,li2014new}.
$$\mu_1 \vee_F \mu_2(f)=\mu_1 \wedge_W \mu_2(f):=Ch(\mu_1(f)\cup \mu_2(f),>_f),\text{ for each firm }f\in F,$$
$$\mu_1 \vee_F \mu_2(w)=\mu_1 \wedge_W \mu_2(w):=\{f:w\in Ch(\mu_1(f)\cup \mu_2(f),>_f)\},\text{ for each worker }w\in W.$$
Similarly, 
$$\mu_1 \vee_W \mu_2(w)=\mu_1 \wedge_F \mu_2(w):=Ch(\mu_1(w)\cup \mu_2(w),>_w),\text{ for each worker }w\in W, $$
$$\mu_1 \vee_W \mu_2(f)=\mu_1 \wedge_F \mu_2(f):=\{w:f\in Ch(\mu_1(w)\cup \mu_2(w),>_w)\},\text{ for each firm }f\in F.$$
\begin{remark}\label{operaciones en matching es estable}
Let $T\subseteq \mathcal{S(P)}$. We denote by $$\bigvee_{\nu\in{T}}\displaystyle{_{\!\!\!\!{F}}}~~\nu(f)=Ch(\bigcup_{\nu\in{T}}\nu(f),>_f)$$ and  $$\bigwedge_{\nu\in{T}}\displaystyle{_{\!\!\!{F}}}~~\nu(f)=\{w:f\in Ch(\bigcup_{\nu\in{T}}\nu(w),>_w)\}.$$ By substitutability and transitivity, \cite{li2014new} proves that  that  $$\bigvee_{\nu\in{T}}\displaystyle{_{\!\!\!\!{F}}}~~\nu(f)\text{ and }\bigwedge_{\nu\in{T}}\displaystyle{_{\!\!\!{F}}}~~\nu(f)$$ are stable matchings and coincide with the \textit{l.u.b.} and \textit{g.l.b.} among the stable matchings in $T$ respectively.
\end{remark}

\section{Random stable matchings: representations}\label{seccion algoritmo}
In this section, we present two results that have interest in themselves and that we use in the next section in order to prove that the set of random stable matchings has a lattice structure. Given a random stable matching that is represented as a lottery over stable matchings, we change its representation as a new lottery over a new set of stable matching. To be more specific, we prove that this new set of stable matchings have a decreasing property, namely, there is $\{\mu_1,\ldots,\mu_{\tilde{k}} \}\subseteq \mathcal{S(P)}$ with $\mu_\ell \geq_F \mu_{\ell+1}$ for $\ell=1,\ldots,{\tilde{k}}-1$. Also, we present a version for random stable matchings of the Rural Hospital Theorem (Proposition (RHT)).
%Now, we formally define a random stable matching as follows.
%Further, given two random stable matchings, we present another algorithm (Algorithm 2) that changes the representation of each lottery with the same number of terms in their sums. Moreover, the new way to represent the two random stable matchings has the following property: both lotteries have the same scalars, term to term, so that it allows us to define the binary operations (\textit{l.u.b.} and \textit{g.l.b.}) for random stable matchings in a straight way. 

%\begin{definition}
%We say that $x$ is a \textbf{random stable matching}, if it is a lottery over stable matchings. 
%\end{definition}

To describe the representation of a random stable matchings, first, we need to define an incidence vector. Then, given a stable matching $\mu$, a vector $x^{\mu}\in\left\{  0,1\right\}^{\left\vert F\right\vert \times\left\vert W\right\vert }$ is an \textbf{incidence vector} where $x_{i,j}^{\mu}=1$ if and only if $j\in{\mu\left(  i\right)}$ and $x_{i,j}^{\mu}=0$ otherwise.
Hence, a random stable matching is represented as a lottery over the incidence vectors of stable matchings. That is,
$$
x=\sum_{\nu\in{\mathcal{S(P)}}} \lambda_{\nu} x^{\nu}
$$
where $0\leq \lambda_{\nu} \leq 1,~\sum_{\nu\in{\mathcal{S(P)}}}\lambda_{\nu}=1, \mbox{ and } \nu\in{\mathcal{S(P)}}$. 

Notice that, each entry of a random stable matching $x$, fulfils that $x_{i,j} \in [0,1]$.

Given a random stable matching $x$, we define the support of $x$ as follows:
$$
supp(x)=\{(i,j):x_{i,j}>0\}.
$$

Given a random stable matching $x$, i.e. $x=\sum_{\nu\in{\mathcal{S(P)}}}\lambda_{\nu} x^{\nu};~0 \leq \lambda_{\nu} \leq 1,~\sum_{\nu\in{\mathcal{S(P)}}}\lambda_{\nu}=1,$ we define $A$ to be the set of all stable matchings involve in the lottery. Formally, 
$$
A=\bigg\{ \nu\in{\mathcal{S(P)}}: x=\sum_{\nu\in{\mathcal{S(P)}}}\lambda_{\nu} x^{\nu};~0< \lambda_{\nu} \leq 1,~\sum_{\nu\in{\mathcal{S(P)}}}\lambda_{\nu}=1 \bigg\}.
$$
 %Since \textit{l.u.b.} and \textit{g.l.b.} for firms are computed by a pointing function, we have that each stable matching $\mu\in B_1$ fulfils $supp (x^{\mu}) \subseteq supp(x).$
Now, in order to change the representation of the random stable matching $x$ proceed as follows:

\begin{center}
\begin{tabular}{l l}
\hline \hline
\multicolumn{2}{l}{\textbf{Algorithm 1:}}\vspace*{10 pt}\\
\textbf{Step $\boldsymbol{0}$} &  Set $B_1:=A ~~\displaystyle{\bigcup} ~\bigg\{\displaystyle{\bigvee_{\nu\in{T}}}\displaystyle{_{\!\!{F}}}~~\nu:T\subseteq A\bigg\}\bigcup\bigg\{\bigwedge_{\nu\in{T}}\displaystyle{_{\!\!{F}}}~~\nu:T\subseteq A\bigg\} .$\\
& \hspace{20 pt}$x^1:=x$.\\
& \hspace{20 pt}$\mathcal{M}:=\emptyset$.\\
& \hspace{20 pt}$\Lambda:=\emptyset$.\\
\textbf{Step $\boldsymbol{k\geq1}$} &  Set $\mu_{k}:=\displaystyle{\bigvee_{\nu\in{B_{k}}}}\displaystyle{_{\!\!\!{F}}}~~\nu.$\\
& \hspace{20 pt}$\mathcal{M}:=\mathcal{M}\cup \{\mu_k\}.$\\
& \hspace{20 pt}$\alpha_{k}:=min\{x^k_{i,j} :x^{\mu_{k}}_{i,j}=1 \}.$\\
& \hspace{20 pt}$\Lambda:=\Lambda \cup \{\alpha_k\}.$\\
&\hspace{20 pt}$\mathcal{L}_k:=\{(i,j)\in F\times W: x^k_{i,j}=\alpha_k \text{ and }x_{i,j}^{\mu_k}=1\}.$\\
& \hspace{20 pt}$C_k:=\displaystyle\bigcup_{(i,j)\in{\mathcal{L}_k}}\{\nu \in B_k:x_{i,j}^{\nu}=1\}.$\\
& \hspace{20 pt}$B_{k+1}:=B_k \setminus C_k.$ \\
& \texttt{IF} $B_{k+1}=\emptyset$,\\
& \hspace{20 pt}\texttt{THEN}, the procedure stops. \\
& \texttt{ELSE} set $x^{k+1}=\frac{x^k-\alpha_kx^{\mu_k}}{1-\alpha_k},$ and  continue to Step $k+1.$\medskip\\
%\end{description}
\hline \hline
\end{tabular}
\end{center}

The following theorem states that there is a unique representation of a random stable matching with the decreasing property in the eyes of all firms.

\begin{theorem}\label{teorema del orden}
Let $x$ be a random stable matching and $\mathcal{M}$ be the output of Algorithm 1. Then, $x$ is represented as a lottery over stable matchings that belong to $\mathcal{M}$ where $\mu_{\ell}>_{F} \mu_{\ell+1}$ for each $\mu_\ell,\mu_{\ell+1} \in \mathcal{M}$. Moreover, the set $\mathcal{M}$ is unique.
\end{theorem}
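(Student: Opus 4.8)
The plan is to split the statement into an \textbf{existence} part (the chain $\mathcal{M}=\{\mu_1,\ldots,\mu_K\}$ produced by Algorithm 1 is strictly $>_F$-decreasing and $x=\sum_k\beta_k x^{\mu_k}$ with $\beta_k>0$ and $\sum_k\beta_k=1$) and a \textbf{uniqueness} part (any strictly $>_F$-decreasing representation of $x$ coincides with $\mathcal{M}$). Throughout I would lean on two inputs from the excerpt. From Remark \ref{operaciones en matching es estable}: every element of each $B_k$ is a stable matching and $\mu_k=\bigvee_{\nu\in B_k}{}_F\nu$ is stable and equal to the least upper bound; moreover the firm-join of a family $T$ is determined only by the union of supports, since $(\bigvee_{\nu\in T}{}_F\nu)(f)=Ch(\bigcup_{\nu\in T}\nu(f),>_f)$. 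From the Rural Hospital Theorem: the row sums $\sum_j x^\nu_{i,j}=|\nu(f_i)|$ and the column sums are constant across all stable $\nu$, hence equal for $x$ and for every residual $x^k$.

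For existence the central object is the invariant, proved by induction on $k$: $x^k$ is a random stable matching with the same row and column sums as $x$, and $supp(x^k)=\bigcup_{\nu\in B_k}supp(\nu)$. Granting it, several consequences follow. First, $\mu_k=\bigvee_{\nu\in B_k}{}_F\nu$ equals the firm-join of $supp(x^k)$, so every edge of $\mu_k$ lies in $supp(x^k)$ and therefore $\alpha_k>0$. Second, a direct computation from $x^{k+1}=(x^k-\alpha_k x^{\mu_k})/(1-\alpha_k)$ shows $x^{k+1}\ge 0$ and $supp(x^{k+1})=supp(x^k)\setminus\mathcal{L}_k$; since $\mathcal{L}_k\neq\emptyset$, the support strictly shrinks and the procedure terminates. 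Third, no matching in $B_{k+1}$ uses an edge of $\mathcal{L}_k$, so $\mu_{k+1}$ omits those edges while $\mu_k$ contains them; combined with $\mu_{k+1}=\bigvee_{\nu\in B_{k+1}}{}_F\nu\le_F\bigvee_{\nu\in B_k}{}_F\nu=\mu_k$, this yields the strict decrease $\mu_k>_F\mu_{k+1}$. Unrolling $x^k=\alpha_k x^{\mu_k}+(1-\alpha_k)x^{k+1}$ telescopes to $x=\sum_{k}\beta_k x^{\mu_k}$ with $\beta_1=\alpha_1$ and $\beta_k=\alpha_k\prod_{\ell<k}(1-\alpha_\ell)$; at termination the invariant forces $supp(x^K)=\mathcal{L}_K=supp(\mu_K)$ with $x^K$ equal to the constant $\alpha_K$ there, and matching row sums give $\alpha_K=1$, so $x^K=x^{\mu_K}$ and $\sum_k\beta_k=1$.

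For uniqueness I would argue by induction on $|supp(x)|$, without reference to the algorithm. Suppose $x=\sum_{\ell=1}^m\beta_\ell x^{\mu_\ell}=\sum_{\ell=1}^{m'}\beta'_\ell x^{\nu_\ell}$ with $\mu_1>_F\cdots>_F\mu_m$, $\nu_1>_F\cdots>_F\nu_{m'}$ and all weights positive summing to one. Because the top of a $>_F$-chain is the firm-join of the whole chain, and the join depends only on the union of supports, both $\mu_1$ and $\nu_1$ equal the firm-join of $supp(x)$, whence $\mu_1=\nu_1$. Next, there is an edge of $\mu_1$ lying in no lower $\mu_\ell$ (otherwise $\mu_1=\bigvee_{\ell\ge2}{}_F\mu_\ell=\mu_2$, contradicting strictness); at such an edge $x_{i,j}=\beta_1$, while evaluating the same entry through the $\nu$-representation gives $x_{i,j}\ge\beta'_1$ because $(i,j)\in supp(\nu_1)$, so $\beta_1\ge\beta'_1$, and by symmetry $\beta_1=\beta'_1$. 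Subtracting $\beta_1 x^{\mu_1}$ and renormalizing yields a random stable matching with strictly smaller support (that edge is killed) admitting the two tails as $>_F$-decreasing representations, so the induction closes; the algorithm's output, being one such representation, must equal the unique $\mathcal{M}$.

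The main obstacle is the invariant itself, precisely the two claims that $x^{k+1}$ is again a convex combination of stable matchings and that $supp(x^{k+1})=\bigcup_{\nu\in B_{k+1}}supp(\nu)$. Nonnegativity and the row/column sums are immediate, but staying inside the convex hull of stable matchings after peeling off $\alpha_k x^{\mu_k}$ is not a formality; I would establish it using substitutability and the L.A.D., either by verifying that $x^{k+1}$ still satisfies the linear stability inequalities that cut out the random stable set, or by explicitly rebuilding a lottery over $B_{k+1}$ from the one over $B_k$. The inclusion $\bigcup_{\nu\in B_{k+1}}supp(\nu)\subseteq supp(x^{k+1})$ is easy; the reverse inclusion — that every surviving support edge is carried by some matching avoiding all of $\mathcal{L}_k$ — is the delicate point, and it is exactly where the choice-function structure (substitutability, path-independence, and the constancy of degrees from the Rural Hospital Theorem) must be used.
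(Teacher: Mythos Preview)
Your existence argument has a real gap at exactly the point you flag. The invariant $supp(x^{k})=\bigcup_{\nu\in B_{k}}supp(\nu)$ is indeed the crux, and neither of the two routes you sketch for its hard inclusion is available here: there is no system of ``linear stability inequalities'' cutting out the random stable set in the many-to-many substitutable/L.A.D.\ setting (that polyhedral description is specific to the marriage model), and you cannot ``rebuild a lottery over $B_{k+1}$ from the one over $B_k$'' because the algorithm never maintains a lottery over $B_k$ in the first place --- it only carries the matrix $x^k$ and the set $B_k$. Without the hard inclusion you cannot conclude $\alpha_K=1$ at termination (and hence $x^K=x^{\mu_K}$); mere support shrinking does not help, since by the Rural Hospital Theorem the row sums of every $x^k$ stay positive, so $supp(x^k)$ never empties.

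The paper avoids the invariant entirely in the existence half. Its key step is Lemma~\ref{nu tilde en el ultimo de los Bk}: the \emph{bottom} element $\tilde\nu=\bigwedge_{\nu\in B_1}{}_{F}\,\nu$ never lands in any $C_k$, so $\tilde\nu\in B_k$ for every $k$ until the last; this is a short direct argument from substitutability. That one fact plus RHT gives Lemma~\ref{Bk+1 no vacio, si y solo si alpha k <1} ($B_{k+1}\neq\emptyset\Leftrightarrow\alpha_k<1$), and termination then forces $\alpha_{\tilde k}=1$, $x^{\tilde k}=x^{\mu_{\tilde k}}$. The support identity you want as an invariant is proved only \emph{afterwards}, inside the uniqueness argument, using the already established equality $x^{\tilde k}=x^{\mu_{\tilde k}}$ to close one direction. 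So your plan inverts the order of dependencies; the missing ingredient is precisely the $\tilde\nu$-survives lemma. Your uniqueness argument, by contrast, is correct and slightly sharper than the paper's: you extract $\mu_1=\nu_1$ from the support-only description of the firm-join, pin down $\beta_1=\beta_1'$ via an edge of $\mu_1$ absent from all lower $\mu_\ell$, and induct on $|supp(x)|$, whereas the paper re-runs the algorithm on two initial representations and only claims uniqueness of the set $\mathcal{M}$.
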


\begin{proof}
See proof in Appendix \ref{apendice A}.
\end{proof}

The following example illustrate Algorithm 1.

\begin{example}\label{ejempplo1}
Let $(F,W,P)$ be a many-to-one matching market instance where $F=\{f_1,f_2,f_3,f_4\}$, $W=\{w_1,w_2,w_3,w_4\}$ and the preference profile is given by

$$
\begin{array}{c}
>_{f_1}=\{w_1,w_2\},\{w_1,w_3\},\{w_2,w_4\},\{w_3,w_4\},\{w_1\},\{w_2\},\{w_3\},\{w_4\}.  \\
>_{f_2}=\{w_3,w_4\},\{w_2,w_4\},\{w_1,w_3\},\{w_1,w_2\},\{w_3\},\{w_4\},\{w_1\},\{w_2\}.  \\
>_{f_3}=\{w_1,w_3\},\{w_3,w_4\},\{w_1,w_2\},\{w_2,w_4\},\{w_1\},\{w_3\},\{w_2\},\{w_4\}.  \\
>_{f_4}=\{w_2,w_4\},\{w_1,w_2\},\{w_3,w_4\},\{w_1,w_3\},\{w_2\},\{w_4\},\{w_1\},\{w_3\}.  \\
>_{w_1}=\{f_2,f_4\},\{f_2,f_3\},\{f_1,f_4\},\{f_1,f_3\},\{f_2\},\{f_4\},\{f_3\},\{f_1\}.  \\
>_{w_2}=\{f_2,f_3\},\{f_1,f_3\},\{f_2,f_4\},\{f_1,f_4\},\{f_2\},\{f_3\},\{f_1\},\{f_4\}.  \\
>_{w_3}=\{f_1,f_4\},\{f_2,f_4\},\{f_1,f_3\},\{f_2,f_3\},\{f_1\},\{f_4\},\{f_2\},\{f_3\}.  \\
>_{w_4}=\{f_1,f_3\},\{f_1,f_4\},\{f_2,f_3\},\{f_2,f_4\},\{f_1\},\{f_3\},\{f_4\},\{f_2\}.  \\
\end{array}
$$
It is easy to check that these preference relations are substitutable and satisfy LAD.
The set of stable matchings is represented in Table 1 and its lattice for the partial order $\geq_F$ is represented in Figure 1.

\bigskip
\begin{minipage}{0.7\linewidth}
\begin{center}
\begin{tabular}{c|cccc}
& $\boldsymbol{f_1}$ & $\boldsymbol{f_2}$ & $\boldsymbol{f_3}$ & $\boldsymbol{f_4}$  \\ \hline 
$\boldsymbol{\nu_1}$ & $\{w_1,w_2\}$ & $\{w_3,w_4\}$ & $\{w_1,w_3\}$ & $\{w_2,w_4\}$ \\
$\boldsymbol{\nu_2}$ & $\{w_1,w_3\}$ & $\{w_2,w_4\}$ & $\{w_3,w_4\}$ & $\{w_1,w_2\}$ \\
$\boldsymbol{\nu_3}$ & $\{w_2,w_4\}$ & $\{w_1,w_3\}$ & $\{w_1,w_2\}$ & $\{w_3,w_4\}$ \\
$\boldsymbol{\nu_4}$ & $\{w_3,w_4\}$ & $\{w_1,w_2\}$ & $\{w_2,w_4\}$ & $\{w_1,w_3\}$ \\
\end{tabular}

\medskip

\hspace {20pt} Table 1

\end{center}
\end{minipage}
\begin{minipage}{0.3\linewidth}
{\small
\hspace{15pt}\begin{tikzpicture}[scale=0.35]
\node (1) at (2,18) {\small{$\nu_1$}};
\node (2) at (0,15) {\small{$\nu_2$}};
\node (3) at (4,15) {\small{$\nu_3$}};
\node (4) at (2,12) {\small{$\nu_4$}};

\draw(1) to (2);
\draw(1) to (3);
\draw(1) to (2);
\draw(2) to (4);
\draw(3) to (4);

\end{tikzpicture}}

\hspace {20pt} Figure 1
\end{minipage}

\bigskip

Let $x^1=\frac{3}{4}x^{\nu_2}+\frac{1}{4}x^{\nu_3}$ be a random stable matching. Now, we change the representation of $x^1$ as in Theorem \ref{teorema del orden}.
Notice that 
$$
x^1=
\left(\begin{matrix}
\frac{3}{4} &\frac{1}{4} & \frac{3}{4} & \frac{1}{4} \\
\frac{1}{4} &\frac{3}{4} & \frac{1}{4} & \frac{3}{4}  \\
\frac{1}{4} &\frac{1}{4} & \frac{3}{4} & \frac{3}{4} \\
\frac{3}{4} & \frac{3}{4} &\frac{1}{4} & \frac{1}{4}  \\
\end{matrix}\right).
$$

Then, $A=\{\nu_2, \nu_3\}$, and $B_1=\{\nu_1,\nu_2,\nu_3,\nu_4\}.$
Set $\mathcal{M}:=\emptyset$ and $\Lambda:=\emptyset.$

\noindent \textbf{Step 1}
Set $\mu_1 := \nu_1=\displaystyle \bigvee_{\nu\in B_1}\displaystyle{_{\!\!\!\!{F}}}~\nu, ~\mathcal{M}:=\mathcal{M}\cup \{\mu_1\}, ~\alpha_1=\frac{1}{4}, ~\Lambda:=\Lambda\cup \{\alpha_1\} $ and $C_1=\{\nu_1,\nu_3\}.$ 

Since $B_2=B_1 \setminus C_1=\{\nu_2,\nu_4\}\neq \emptyset, $ then set
$$
x^2:=\frac{x-\frac{1}{4}x^{\mu_1}}{1-\frac{1}{4}}=\left(\begin{matrix}
\frac{2}{3} & 0 &1 & \frac{1}{3}  \\
\frac{1}{3} &1 & 0 & \frac{2}{3}  \\
0 & \frac{1}{3}  &\frac{2}{3} & 1 \\
 1 & \frac{2}{3} &\frac{1}{3} & 0  \\
\end{matrix}\right),
$$
and continue to Step 2.

\noindent \textbf{Step 2}
Set $\mu_2 := \nu_2=\displaystyle \bigvee_{\nu\in B_2}\displaystyle{_{\!\!\!\!{F}}}~\nu, ~\mathcal{M}:=\mathcal{M}\cup \{\mu_2\}, ~\alpha_2=\frac{2}{3}, ~\Lambda:=\Lambda\cup \{\alpha_2\} $ and $C_2=\{\nu_2\}.$ 

Since $B_3=B_2 \setminus C_2=\{\nu_4\}\neq \emptyset, $ then set
$$
x^3=\frac{x-\frac{2}{3}x^{\mu_2}}{1-\frac{2}{3}}=\left(\begin{matrix}
0 & 0 & 1 & 1 \\
1 &1 & 0 & 0  \\
0 & 1 &0 & 1 \\
 1 & 0 &1 & 0  \\
\end{matrix}\right),
$$
and continue to Step 3.

\noindent \textbf{Step 3}
Set $\mu_3 := \nu_4=\displaystyle \bigvee_{\nu\in B_3}\displaystyle{_{\!\!\!\!{F}}}~\nu, ~\mathcal{M}:=\mathcal{M}\cup \{\mu_3\}, ~\alpha_3=1, ~\Lambda:=\Lambda\cup \{\alpha_3\} $ and $C_3=\{\nu_4\}.$ 

Since $B_4=B_3\setminus C_3= \emptyset, $ then the procedure stops.

The output of Algorithm 1 is $\mathcal{M}=\{\mu_1,\mu_2,\mu_3\}=\{\nu_1,\nu_2,\nu_4\},$ and $\Lambda=\{\frac{1}{4},\frac{2}{3},1\}.$

Therefore,
\begin{center}
$x^1=\frac{1}{4}x^{\mu_1}+(1-\frac{1}{4})(\frac{2}{3})x^{\mu_2}+(1-\frac{1}{4})(1-\frac{2}{3})(1)x^{\mu_3}$
\end{center} 
\begin{center}
$
=\frac{1}{4}x^{\mu_1}+\frac{1}{2}x^{\mu_2}+\frac{1}{4}x^{\mu_3}.
$
\end{center}

Since $\mu_1=\nu_1$, $\mu_2=\nu_2$ and $\mu_3=\nu_4$, then $x^1$ can be written as:

\begin{center}
$x^1=\frac{1}{4}x^{\nu_1}+\frac{1}{2}x^{\nu_2}+\frac{1}{4}x^{\nu_4}.
$
\end{center}

As we can see in Figure 1, the stable matchings of the lottery fulfils $\nu_1 \geq_F \nu_2 \geq_F  \nu_4$.
\end{example}

The following proposition it is known as \textit{Rural Hospital Theorem}. For a many-to-many matching markets where the preference relation of each agent satisfies substitutability and the LAD is proved in \cite{alkan2002class}.
\medskip

\noindent \textbf{Proposition (RHT) (\cite{alkan2002class})} \textit{
Each agent is matched with the same number of partners in every stable matching. That is, $|\mu(a)|=|\mu'(a)|$ for each $\mu,\mu'\in \mathcal{S(P)}$ and for each $a\in F\cup W$.}
\medskip

Next, we present a version for random stable matchings of Proposition (RHT).
\begin{proposition}\label{hospital rural para random}
Let $x$ and $x'$ be two random stable matchings, then $\sum_{i\in F}x_{i,j}=\sum_{i\in F}x'_{i,j}$ for each $j\in W$, and $\sum_{j\in W}x_{i,j}=\sum_{j\in W}x'_{i,j}$ for each $i\in F$.
\end{proposition}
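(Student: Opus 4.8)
The plan is to reduce the statement to the deterministic Rural Hospital Theorem (Proposition (RHT)) by exploiting the fact that, by definition, every random stable matching is a convex combination of incidence vectors of (deterministic) stable matchings. First I would write $x=\sum_{\nu\in\mathcal{S(P)}}\lambda_\nu x^\nu$ and $x'=\sum_{\nu\in\mathcal{S(P)}}\lambda'_\nu x^\nu$ with $0\le\lambda_\nu,\lambda'_\nu\le 1$ and $\sum_\nu\lambda_\nu=\sum_\nu\lambda'_\nu=1$, exactly as in the definition of a random stable matching.

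Next, I would fix a worker $j\in W$ and observe that for any single stable matching $\nu$ the column sum of its incidence vector satisfies $\sum_{i\in F}x^\nu_{i,j}=|\nu(j)|$, the number of firms matched to $j$ under $\nu$. The key input is Proposition (RHT): since every agent has the same number of partners in every stable matching, the quantity $|\nu(j)|$ does not depend on the chosen stable matching $\nu$; denote it by $d_j$. This is where the whole content of the argument sits.

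The remaining computation is then a routine interchange of finite sums combined with $\sum_\nu\lambda_\nu=1$:
$$\sum_{i\in F}x_{i,j}=\sum_{i\in F}\sum_{\nu}\lambda_\nu x^\nu_{i,j}=\sum_{\nu}\lambda_\nu\sum_{i\in F}x^\nu_{i,j}=\sum_{\nu}\lambda_\nu|\nu(j)|=d_j\sum_{\nu}\lambda_\nu=d_j.$$
Applying the identical chain to $x'$ gives $\sum_{i\in F}x'_{i,j}=d_j$ as well, which establishes the first equality. The row statement follows by the symmetric argument: fixing a firm $i\in F$ and using that $\sum_{j\in W}x^\nu_{i,j}=|\nu(i)|$ is also constant across all stable matchings by Proposition (RHT), so that $\sum_{j\in W}x_{i,j}=|\nu(i)|=\sum_{j\in W}x'_{i,j}$.

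I do not expect a genuine obstacle here: once one recognizes that the column and row sums of each stable matching's incidence vector are precisely the partner counts controlled by Proposition (RHT), linearity of the finite sum and the normalization $\sum_\nu\lambda_\nu=1$ finish the proof immediately. The only point worth stating carefully is that the interchange of the two finite sums is legitimate, which is automatic since both $F$ and $\mathcal{S(P)}$ are finite; in particular the conclusion is independent of which stable matchings actually receive positive weight in the representation of $x$ or $x'$.
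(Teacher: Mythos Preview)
Your proof is correct and rests on the same idea as the paper's: reduce to the deterministic Rural Hospital Theorem via the linearity of the convex-combination representation. The paper routes the argument through Lemma~\ref{todos suman lo mismo} (applied at $k=1$), which is phrased in terms of the Algorithm~1 iterates $x^k$, whereas you bypass that machinery and compute directly from the definition of a random stable matching; your route is slightly more self-contained, but the mathematical content is identical.
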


\begin{proof}
See proof in Appendix \ref{apendice A}. 
\end{proof}

From now on, by Proposition \ref{teorema del orden}, we assume that each random stable matching is already represented as a lottery over stable matchings in a decreasing way.

\section{Partial order for random stable matchings}\label{order}

In this section, we define a  partial order for the set of random stable set in a many-to-many matching market with substitutable preferences satisfying the L.A.D.. This partial order is a generalization of the first-order stochastic dominance presented in \cite{roth1993stable} for the random stable set in the marriage market.  
Given two random stable matching $x$ and $y$ for the marriage market $(M,W,P)$, \cite{roth1993stable} define the partial order as follows.
They say that $x$ weakly dominates$^\star$ $y$ for man $m$,  (here denoted by $x \succeq_m^\star y$) if

$$
\sum_{j\geq_mw} x_{m,j} \geq \sum_{j\geq_mw} y_{m,j}
$$ for each $w\in W$. Further they say that $x \succeq_M^\star y$ if $x \succeq_m^\star y$ for each $m\in M$. The partial order $\succeq^\star_W$ is defined analogously.
Notice that the partial orders $\succeq_M^\star$ and $\succeq_W^\star$ can not order random stable matchings when agents have preferences over subset of agents on the other side of the market in a substitutable manner. For this reason, for the setting considered in this paper, we define a new partial order. Formally,
\begin{definition}\label{defino orden estocastico}
Let $x=\sum_{i=1}^{I}\alpha_{i}x^{\mu^{x}_{i}}$ and $y=\sum_{j=1}^{J}\beta_{j}x^{\mu^{y}_{j}}$ with $\mu^{x}_i \geq_F \mu^{x}_{i +1}$ for each $i=1,\ldots,I-1$ and $\mu^{y}_j \geq_F \mu^{y}_{j +1}$ for each $j=1,\ldots,J-1$. 
We say that $\boldsymbol{x}$ \textbf{weakly dominates} $\boldsymbol{y}$ for the firm $f$, ($x\succeq_{f} y$), if and only if
for each $\mu^y_{j}(f)$
$$
\sum_{i:\mu_{i}^{x}(f)\geq_{f}\mu_{j}^{y}(f)}\alpha_{i} \geq \sum_{l:\mu_{l}^{y}(f)\geq_{f}\mu_{j}^{y}(f)}\beta_{l}.
$$
\end{definition}
Further, we say that $\boldsymbol{x}$ \textbf{strongly dominates} $\boldsymbol{y}$ for the firm $f$, ($x\succ_f y$), if the above inequalities hold with at least one strict inequality for some $\mu_j^y(f)$. That is, $x\succ_f y$ when $x\succeq_f y$ and $x\neq y$ for the firm $f$. Further, if $x\succeq_{f} y$ for each $f\in F$ we denote that $x\succeq_{F} y$. We define $\succeq_{w}$, $\succ_{w}$ and $\succeq_{W}$ analogously. If we interpret the $x_{f,w}$ as the probability that $f$ is matched with $w$, then $x\succeq_f y$ if $x_{f,\cdot}$ stochastically dominates $y_{f,\cdot}$. 
Notice that, since both $x$ and $y$ are represented following Theorem \ref{teorema del orden}, then  $$\sum_{l:\mu_{l}^{y}(f)\geq_{f}\mu_{j}^{y}(f)}\beta_{l} =\sum_{l=1}^{j}\beta_{l}.$$

Now we prove that the domination relation $\succeq_{F}$ is a partial order. The proof of $\succeq_{W}$ is analogously. Formally,

\begin{proposition}\label{Es orden parcial}
The domination relation $\succeq_{F}$ is a partial order.
\end{proposition}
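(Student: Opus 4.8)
The plan is to verify the three defining properties of a partial order for the relation $\succeq_F$: reflexivity, antisymmetry, and transitivity. Since $\succeq_F$ is defined as $x \succeq_F y$ iff $x \succeq_f y$ for every firm $f \in F$, it suffices to establish each property firmwise, i.e. for the relation $\succeq_f$, and then the conjunction over all firms inherits the three properties immediately. So the real content is checking that $\succeq_f$ is a partial order on the (firm-$f$ marginals of) random stable matchings.

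First I would reduce the bookkeeping using the observation already recorded right after Definition \ref{defino orden estocastico}: because $x$ and $y$ are written in the decreasing representation guaranteed by Theorem \ref{teorema del orden}, the right-hand sum collapses to a prefix sum, $\sum_{l:\mu^y_l(f)\geq_f \mu^y_j(f)}\beta_l=\sum_{l=1}^{j}\beta_l$. Thus $x\succeq_f y$ asserts that at each ``threshold'' $\mu^y_j(f)$ the cumulative $x$-mass on firm-$f$ sets at least as good as $\mu^y_j(f)$ dominates the corresponding cumulative $y$-mass. This is precisely first-order stochastic dominance of the distribution $x_{f,\cdot}$ over $y_{f,\cdot}$ with respect to the total order $\geq_f$ restricted to the support, so I would phrase the argument in those terms. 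Reflexivity ($x\succeq_f x$) is then immediate since each cumulative sum equals itself, giving equality in every inequality. Antisymmetry requires showing that $x\succeq_f y$ and $y\succeq_f x$ for all $f$ force $x=y$: here I would argue that mutual cumulative dominance at every threshold forces equality of all the prefix sums (reading the inequalities in both directions), and hence equality of the individual masses $\alpha_i=\beta_i$ on matching firm-$f$ sets; combined over all firms this pins down the lottery and yields $x=y$ as random stable matchings.

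For transitivity, suppose $x\succeq_f y$ and $y\succeq_f z$, with $z=\sum_{k=1}^{K}\gamma_k x^{\mu^z_k}$ in decreasing representation. I need $x\succeq_f z$, i.e. for each threshold $\mu^z_k(f)$ that $\sum_{i:\mu^x_i(f)\geq_f \mu^z_k(f)}\alpha_i\geq \sum_{l=1}^{k}\gamma_k$. The natural route is to chain the two hypotheses through the common middle term $y$: at the threshold $\mu^z_k(f)$, I would compare the $z$-cumulative mass to the $y$-cumulative mass over sets $\geq_f \mu^z_k(f)$ (using $y\succeq_f z$), and then pass from the $y$-cumulative mass to the $x$-cumulative mass over the same threshold set (using $x\succeq_f y$). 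The subtlety is that the definition only supplies the dominance inequalities at thresholds belonging to the middle lottery's own support $\{\mu^y_j(f)\}$, whereas transitivity demands control at an arbitrary threshold $\mu^z_k(f)$ coming from $z$.

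I expect this last point to be the main obstacle, and I would handle it by exploiting the total order that $\geq_f$ induces on the relevant firm-$f$ sets: the cumulative-mass functions $t\mapsto \sum_{\mu(f)\geq_f t}(\cdot)$ are step functions that are constant between consecutive support values and only jump at support points, so an inequality holding at all support thresholds of the dominated lottery automatically extends to every intermediate value of $t$, including the thresholds dictated by $z$. Concretely, for the threshold $t=\mu^z_k(f)$ I would locate the smallest $y$-support set that is $\geq_f t$ and use monotonicity of the step functions to transfer the inequality; the Rural Hospital Theorem for random stable matchings (Proposition \ref{hospital rural para random}) and the common-ordering structure of the lattice (Remark \ref{operaciones en matching es estable}) ensure these cumulative functions are well defined and comparable across $x$, $y$, and $z$. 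Once the inequalities are transported to a common set of thresholds, chaining $x\succeq_f y\succeq_f z$ is a routine transitivity of numerical inequalities, completing the argument for $\succeq_f$ and hence for $\succeq_F$.
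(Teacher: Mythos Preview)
Your proposal is correct and follows essentially the same route as the paper: verify reflexivity, transitivity, and antisymmetry, with transitivity handled by locating for each $z$-threshold the nearest intermediate $y$-threshold and chaining the two dominance hypotheses through it (the paper does exactly this, taking $\mu^{y}_{j}=\min_{\geq_F}\{\mu^{y}_{l}:\mu^{y}_{l}\geq_F\mu^{z}_{k}\}$; your step-function phrasing is the same idea). One minor point: your appeals to Proposition~\ref{hospital rural para random} and Remark~\ref{operaciones en matching es estable} are misplaced---the monotonicity of the cumulative sums that you need comes directly from the decreasing representation of Theorem~\ref{teorema del orden}, and neither the Rural Hospital Theorem nor the lattice operations play any role in this argument.
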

\begin{proof}
See proof in Appendix \ref{apendice B}.
\end{proof}

\subsection{Splitting procedure}

In this subsection we explain the splitting procedure for two random stable matchings, that is formalized with an algorithm in Appendix \ref{apendice B}. Once we apply the splitting procedure for two random stable matchings, we define a domination relation ($\succeq_F^S$) that is further used to define the binary operation in a simple way.
Given two random stable matchings  $x=\sum_{i=1}^{I}\alpha_{i}x^{\mu^{x}_{i}}$ and $y=\sum_{j=1}^{J}\beta_{j}x^{\mu^{y}_{j}}$ with $\mu^{x}_i \geq_F \mu^{x}_{i +1}$ for each $i=1,\ldots,I-1$ and $\mu^{y}_j \geq_F \mu^{y}_{j +1}$ for each $j=1,\ldots,J-1$, the splitting procedure goes as follows:
Let $\gamma_1=min\{\alpha_1, \beta_1\}$. W.l.o.g. assume that $\gamma_1=\alpha_1.$ Then,
$$x=\gamma_1 \mu^x_1+\sum_{i=2}^{I}\alpha_{i}x^{\mu^{x}_{i}}$$
$$y=\gamma_1  \mu_1^y+(\beta_1-\gamma_1)\mu^{y}_1+\sum_{j=2}^{J}\beta_{j}x^{\mu^{y}_{j}}.$$
Notice that the first terms of each new representation have the same scalar. Now, take the second scalar of each representation and set $\gamma_2=min\{\alpha_2, \beta_1-\gamma_1\}$. If $\gamma_2=\alpha_2$, then 
$$x=\gamma_1 \mu^x_1+\gamma_2 \mu^x_2+\sum_{i=3}^{I}\alpha_{i}x^{\mu^{x}_{i}}$$
$$y=\gamma_1  \mu_1^y+\gamma_2 \mu^y_1+(\beta_1-\gamma_1-\gamma_2)\mu^{y}_1+\sum_{j=2}^{J}\beta_{j}x^{\mu^{y}_{j}}.$$
If $\gamma_2=\beta_1-\gamma_1$, then 
$$x=\gamma_1 \mu^x_1+\gamma_2 \mu^x_2+(\alpha_2-\gamma_2) \mu^x_2+\sum_{i=3}^{I}\alpha_{i}x^{\mu^{x}_{i}}$$
$$y=\gamma_1  \mu_1^y+\gamma_2 \mu^y_1+\sum_{j=2}^{J}\beta_{j}x^{\mu^{y}_{j}}.$$
Notice that the first two terms of each new representation have the same scalar. Now take the third scalar of each representation and set either $\gamma_3= min\{\alpha_3, \beta_1-\gamma_1-\gamma_2\}$ or $\gamma_3=min \{\alpha_2-\gamma_2, \beta_2\}$, and so forth so on.

We illustrate the splitting procedure with the following example.
 
\noindent \textbf{Example 1 (Continued)} \textit{Let $x=\frac{1}{4}x^{\nu_1}+\frac{1}{2}x^{\nu_2}+\frac{1}{4}x^{\nu_4}$ and $y=\frac{1}{6}x^{\nu^1}+\frac{1}{2}x^{\nu^3}+\frac{1}{3}x^{\nu^4}$. Notice that both random stable matchings are represented following Theorem \ref{teorema del orden}. 
Let $\gamma_1=min\{\frac{1}{4},\frac{1}{6}\}=\frac{1}{6},$ then
\begin{center}
$x=\frac{1}{6}x^{\nu_1}+(\frac{1}{4}-\frac{1}{6})x^{\nu_1}+\frac{1}{2}x^{\nu_2}+\frac{1}{4}x^{\nu_4},$
\end{center}
\begin{center}
$y=\frac{1}{6}x^{\nu^1}+\frac{1}{2}x^{\nu^3}+\frac{1}{3}x^{\nu^4}$
\end{center}
Notice that the first term of each new representation have the same scalar $\frac{1}{6}$.
Let $\gamma_2=min\{\frac{1}{4}-\frac{1}{6},\frac{1}{2}\}=\frac{1}{4}-\frac{1}{6}=\frac{1}{12},$ then
\begin{center}
$x=\frac{1}{6}x^{\nu_1}+\frac{1}{12}x^{\nu_1}+\frac{1}{2}x^{\nu_2}+\frac{1}{4}x^{\nu_4},$
\end{center}
\begin{center}
$y=\frac{1}{6}x^{\nu^1}+\frac{1}{12}x^{\nu_3}+(\frac{1}{2}-\frac{1}{12})x^{\nu^3}+\frac{1}{3}x^{\nu^4}.$
\end{center} 
Notice that the second term of each new representation also have the same scalar $\frac{1}{12}$.
Let $\gamma_3=min\{\frac{1}{2},\frac{1}{2}-\frac{1}{12}\}=\frac{1}{2}-\frac{1}{12}=\frac{5}{12},$ then
\begin{center}
$x=\frac{1}{6}x^{\nu_1}+\frac{1}{12}x^{\nu_1}+\frac{5}{12}x^{\nu^2}+(\frac{1}{2}-\frac{5}{12})x^{\nu_2}+\frac{1}{4}x^{\nu_4},$
\end{center}
\begin{center}
$y=\frac{1}{6}x^{\nu^1}+\frac{1}{12}x^{\nu_3}+\frac{5}{12}x^{\nu^3}+\frac{1}{3}x^{\nu^4}.$
\end{center} 
Notice that the third term of each new representation also have the same scalar $\frac{5}{12}$.
Let $\gamma_4=min\{\frac{1}{2}-\frac{5}{12},\frac{1}{3}\}=\frac{1}{2}-\frac{1}{12}=\frac{1}{12},$ then
\begin{center}
$x=\frac{1}{6}x^{\nu_1}+\frac{1}{12}x^{\nu_1}+\frac{5}{12}x^{\nu^2}+\frac{1}{12}x^{\nu_2}+\frac{1}{4}x^{\nu_4},$
\end{center}
\begin{center}
$y=\frac{1}{6}x^{\nu^1}+\frac{1}{12}x^{\nu_3}+\frac{5}{12}x^{\nu^3}+\frac{1}{12}x^{\nu_4}+(\frac{1}{3}-\frac{1}{12})x^{\nu^4}.$
\end{center}
Notice that the fourth term of each new representation also have the same scalar $\frac{1}{12}$.
Let $\gamma_4=min\{\frac{1}{4},\frac{1}{3}-\frac{1}{12}\}=min\{\frac{1}{4},\frac{1}{4}\}=\frac{1}{4},$ then
\begin{center}
$x=\frac{1}{6}x^{\nu_1}+\frac{1}{12}x^{\nu_1}+\frac{5}{12}x^{\nu^2}+\frac{1}{12}x^{\nu_2}+\frac{1}{4}x^{\nu_4},$
\end{center}
\begin{center}
$y=\frac{1}{6}x^{\nu^1}+\frac{1}{12}x^{\nu_3}+\frac{5}{12}x^{\nu^3}+\frac{1}{12}x^{\nu_4}+\frac{1}{4}x^{\nu^4}.$
\end{center}
Notice that the fifth term of each new representation also have the same scalar $\frac{1}{4}$.
Now, once the splitting procedure is complete, both $x$ and $y$ have five terms in each representation. Moreover, both lotteries have the same scalar, term to term. }\medskip

Algorithm 2 presented in Appendix \ref{apendice B} is the formalization  of the splitting procedure for two random stable matchings. In Appendix \ref{apendice B}, using the same example, we illustrate the splitting procedure using Algorithm 2 detailed the procedure step by step. Further, the following proposition states that the splitting procedure changes the representation of the two random stable matchings.
\begin{proposition}\label{proposicion reescribir con algoritmo}
Let $x$ and $y$ be two random stable matchings such that $$x=\sum_{\ell=1}^{I}\alpha^0_{\ell}\mu^{x}_{\ell} \text{ ~~~~~   and    ~~~~~} y=\sum^J_{\ell=1}\beta^0_{\ell}\mu^{y}_{\ell}.$$ 
Then, there is $\Omega= \left\{\left(\gamma_\ell,~\tilde{\mu}^x_\ell,~\tilde{\mu}^y_\ell \right): \ell=1,\ldots,\tilde{k}\right\}$ defined by Algorithm 2, where $\tilde{k}$ is the last step of the algorithm  such that
$$x=\sum_{\ell=1}^{\tilde{k} }\gamma_\ell \tilde{\mu}_\ell^{x} \text{ ~~~~~   and    ~~~~~}y=\sum _{\ell=1}^{\tilde{k}}\gamma_\ell \tilde{\mu}_\ell^{y}.$$ 
\end{proposition}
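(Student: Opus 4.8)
The plan is to reinterpret the splitting procedure as the construction of the common refinement of two partitions of the unit probability mass, and then to check that regrouping the refined pieces by their original label recovers each of the two lotteries.

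First I would record the relevant partial sums. Set $a_0:=0$ and $a_i:=\sum_{k=1}^{i}\alpha^0_k$ for $i=1,\ldots,I$, and likewise $b_0:=0$ and $b_j:=\sum_{k=1}^{j}\beta^0_k$ for $j=1,\ldots,J$. Since $x$ and $y$ are random stable matchings, their lotteries satisfy $\sum_{k=1}^{I}\alpha^0_k=\sum_{k=1}^{J}\beta^0_k=1$, so both $(a_i)$ and $(b_j)$ are strictly increasing sequences running from $0$ to $1$. These cut $[0,1]$ into the blocks $(a_{i-1},a_i]$ carrying the label $\mu^x_i$, and the blocks $(b_{j-1},b_j]$ carrying the label $\mu^y_j$; by Theorem \ref{teorema del orden} the labels on each side are distinct, so each block is unambiguously tagged.

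Next I would prove, by induction on the step counter $\ell$, the invariant that after step $\ell$ the total mass assigned so far, $c_\ell:=\sum_{k=1}^{\ell}\gamma_k$, equals $\min\{a_{i(\ell)},b_{j(\ell)}\}$, where $i(\ell)$ and $j(\ell)$ are the indices of the $x$- and $y$-terms currently being consumed; equivalently, $c_\ell$ always coincides with a point of $\{a_0,\ldots,a_I\}\cup\{b_0,\ldots,b_J\}$, and the block $(c_{\ell-1},c_\ell]$ is contained in exactly one $x$-block and exactly one $y$-block, whose labels are precisely $\tilde{\mu}^x_\ell$ and $\tilde{\mu}^y_\ell$. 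The inductive step is immediate from the rule $\gamma_\ell=\min\{\,a_{i(\ell)}-c_{\ell-1},\,b_{j(\ell)}-c_{\ell-1}\,\}$: since $c_{\ell-1}<a_{i(\ell)}$ and $c_{\ell-1}<b_{j(\ell)}$ we have $\gamma_\ell>0$, and taking the minimum advances whichever pointer (or both, in the tie case $a_{i(\ell)}=b_{j(\ell)}$) reaches its block boundary, while leaving $(c_{\ell-1},c_\ell]$ inside a single block on each side. Termination follows because every step advances at least one of the pointers $i(\ell),j(\ell)$, which range only over $\{1,\ldots,I\}$ and $\{1,\ldots,J\}$; hence the procedure stops after $\tilde{k}\leq I+J-1$ steps, at which point $c_{\tilde{k}}=a_I=b_J=1$.

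Finally I would read off the reconstruction identity. By the invariant, for each fixed $i$ the steps $\ell$ with $\tilde{\mu}^x_\ell=\mu^x_i$ are exactly those whose block $(c_{\ell-1},c_\ell]$ lies inside $(a_{i-1},a_i]$; these blocks partition $(a_{i-1},a_i]$, so their lengths $\gamma_\ell$ sum to $a_i-a_{i-1}=\alpha^0_i$. Therefore
$$\sum_{\ell=1}^{\tilde{k}}\gamma_\ell\, x^{\tilde{\mu}^x_\ell}=\sum_{i=1}^{I}\Bigg(\sum_{\ell:\,\tilde{\mu}^x_\ell=\mu^x_i}\gamma_\ell\Bigg)x^{\mu^x_i}=\sum_{i=1}^{I}\alpha^0_i\, x^{\mu^x_i}=x,$$
and the identical argument with $b_j,\beta^0_j$ in place of $a_i,\alpha^0_i$ yields $\sum_{\ell=1}^{\tilde{k}}\gamma_\ell\, x^{\tilde{\mu}^y_\ell}=y$. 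I expect the main obstacle to be the bookkeeping in the inductive invariant rather than any deep content: one must track the two pointers carefully through the tie case (where both advance simultaneously) to guarantee that every block $(c_{\ell-1},c_\ell]$ sits inside a single block on each side, so that $\tilde{\mu}^x_\ell$ and $\tilde{\mu}^y_\ell$ are well defined and constant there. Everything else reduces to the elementary fact that refining two partitions of $[0,1]$ and regrouping by original block recovers the original block lengths.
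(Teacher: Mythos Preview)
Your argument is correct. The underlying content is the same as the paper's proof---both show that the $\gamma_\ell$ output by Algorithm~2 subdivide each original weight $\alpha^0_i$ (respectively $\beta^0_j$) into consecutive pieces that add back up to it, while the label $\tilde\mu^x_\ell$ (respectively $\tilde\mu^y_\ell$) stays fixed across those pieces---but the packaging is different. The paper follows the algorithm literally: it tracks the running remainder $\alpha_1^{k-1}$ step by step, finds the first $k_1$ at which the first block is exhausted, verifies $\sum_{t\le k_1}\gamma_t=\alpha_1^0$ with $\tilde\mu^x_t=\mu^x_1$ throughout, and then says ``repeat for $\ell\ge 2$'' (termination is handled in a separate lemma). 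You instead recast Algorithm~2 as computing the common refinement of the two partitions $\{a_i\}$ and $\{b_j\}$ of $[0,1]$, maintain a single invariant on the cumulative mass $c_\ell$, and read off both the termination bound $\tilde k\le I+J-1$ and the regrouping identity in one stroke. Your formulation is cleaner and makes the structure transparent; the paper's version stays closer to the algorithm's notation but is more ad hoc. One small remark: your appeal to Theorem~\ref{teorema del orden} for distinctness of the $\mu^x_i$ is fine given the paper's standing convention that inputs are in decreasing representation, but you could also sidestep it entirely by grouping the steps by block index $i$ (i.e., by the condition $(c_{\ell-1},c_\ell]\subseteq(a_{i-1},a_i]$) rather than by label.
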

\begin{proof}
See proof in Appendix \ref{apendice B}.
\end{proof}

Once two random stable matchings goes through the splitting procedure, we can define the following domination relation. This domination relation and its equivalence with the partial order defined in Section \ref{order}, are used in the next section to prove the main result.
\begin{definition}
Let $x$ and $y$ be two random stable matchings such that $$x=\sum_{\ell=1}^{\tilde{k} }\gamma_\ell \tilde{\mu}_\ell^{x} \text{ ~~~~~   and    ~~~~~}y=\sum _{\ell=1}^{\tilde{k}}\gamma_\ell \tilde{\mu}_\ell^{y},$$  where for each $\ell=1,\ldots,\tilde{k}$,~~ $0< \gamma_{\ell} \leq 1$, $\sum_{\ell=1}^{\tilde{k}}\gamma_{\ell}=1,$ and for each $\ell=1,\ldots,\tilde{k}-1$ $\tilde{\mu}_\ell^{x} \geq_F \tilde{\mu}_{\ell+1}^{x}$ and $\tilde{\mu}_\ell^{y} \geq_F \tilde{\mu}_{\ell+1}^{y}.$

We say that $\boldsymbol{x}$ \textbf{splittely dominates} $\boldsymbol{y}$ for all firms ($\boldsymbol{x\succeq_{F}^{S} y}$) if
 $\tilde{\mu}^{x}_{\ell} \geq_{F} \tilde{\mu}^{y}_{\ell}$ for each $\ell=1,\ldots,\tilde{k}.$ Analogously, we define $\boldsymbol{x\succeq^S_{W} y}$ for all workers. 
\end{definition}

\begin{remark}
Notice that the partial order $\succeq_F$ ($\succeq_W$) compares two random stable matchings independently if they are or not splitted.
\end{remark}

The following proposition  proves that both domination relations ($\succeq_F$ and $\succeq_F^S$) are equivalent.

\begin{proposition}\label{equivalencia de ordenes}
The partial order $\succeq_F$ is equivalent to the domination relation $\succeq_F^S$.
\end{proposition}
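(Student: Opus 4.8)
The plan is to prove the equivalence firm by firm and threshold by threshold, and then take the conjunction over all firms. Since $x\succeq_{F}y$ means $x\succeq_{f}y$ for every $f\in F$, while $x\succeq_{F}^{S}y$ means $\tilde{\mu}^{x}_{\ell}(f)\geq_{f}\tilde{\mu}^{y}_{\ell}(f)$ for every $\ell$ and every $f$, it suffices to show, for a fixed firm $f$, that $x\succeq_{f}y$ holds if and only if $\tilde{\mu}^{x}_{\ell}(f)\geq_{f}\tilde{\mu}^{y}_{\ell}(f)$ for all $\ell=1,\ldots,\tilde{k}$. To evaluate $\succeq_{f}$ I would use the split representations $x=\sum_{\ell=1}^{\tilde{k}}\gamma_{\ell}\tilde{\mu}^{x}_{\ell}$ and $y=\sum_{\ell=1}^{\tilde{k}}\gamma_{\ell}\tilde{\mu}^{y}_{\ell}$ of Proposition \ref{proposicion reescribir con algoritmo}: splitting only subdivides the scalar attached to a fixed matching, so for any partner set $S$ the total $x$-mass on matchings $\nu$ with $\nu(f)\geq_{f}S$ is unchanged, and likewise for $y$; by the preceding Remark this is exactly the statement that the inequalities defining $\succeq_{f}$ may be evaluated on the split representation. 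I will use throughout that both chains are $\geq_{F}$-decreasing, hence $\tilde{\mu}^{x}_{1}(f)\geq_{f}\cdots\geq_{f}\tilde{\mu}^{x}_{\tilde{k}}(f)$ and $\tilde{\mu}^{y}_{1}(f)\geq_{f}\cdots\geq_{f}\tilde{\mu}^{y}_{\tilde{k}}(f)$, and that $\geq_{f}$ is transitive, being the per-firm component of the partial order $\geq_{F}$. In these terms $x\succeq_{f}y$ is precisely first-order stochastic dominance of the two decreasing chains carrying the common weights $\gamma_{\ell}$.

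For the implication $x\succeq_{F}^{S}y\Rightarrow x\succeq_{F}y$, fix $f$ and a threshold $S=\tilde{\mu}^{y}_{m}(f)$. Since the $y$-chain is decreasing and $\geq_{f}$ transitive, the set $\{\,l:\tilde{\mu}^{y}_{l}(f)\geq_{f}S\,\}$ is an initial segment $\{1,\ldots,p\}$ with $p\geq m$. For each $l\leq p$ the hypothesis gives $\tilde{\mu}^{x}_{l}(f)\geq_{f}\tilde{\mu}^{y}_{l}(f)\geq_{f}S$, whence $\tilde{\mu}^{x}_{l}(f)\geq_{f}S$ by transitivity; therefore $\{\,l:\tilde{\mu}^{y}_{l}(f)\geq_{f}S\,\}\subseteq\{\,l:\tilde{\mu}^{x}_{l}(f)\geq_{f}S\,\}$ and $\sum_{l:\tilde{\mu}^{x}_{l}(f)\geq_{f}S}\gamma_{l}\geq\sum_{l:\tilde{\mu}^{y}_{l}(f)\geq_{f}S}\gamma_{l}$. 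This is the inequality of Definition \ref{defino orden estocastico} at the term $\tilde{\mu}^{y}_{m}(f)$, and as $m$ was arbitrary we obtain $x\succeq_{f}y$; ranging over $f$ gives $x\succeq_{F}y$.

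For the converse $x\succeq_{F}y\Rightarrow x\succeq_{F}^{S}y$, I would argue by contraposition inside a fixed firm $f$. Suppose term-by-term domination fails and let $m$ be the least index with $\tilde{\mu}^{x}_{m}(f)\not\geq_{f}\tilde{\mu}^{y}_{m}(f)$; set $S=\tilde{\mu}^{y}_{m}(f)$. On the $y$-side $\{1,\ldots,m\}\subseteq\{\,l:\tilde{\mu}^{y}_{l}(f)\geq_{f}S\,\}$, so the $y$-cumulative at $S$ is at least $\gamma_{1}+\cdots+\gamma_{m}$. On the $x$-side no index $l\geq m$ satisfies $\tilde{\mu}^{x}_{l}(f)\geq_{f}S$: for $l=m$ this is the defining failure, and for $l>m$ the decreasing property gives $\tilde{\mu}^{x}_{m}(f)\geq_{f}\tilde{\mu}^{x}_{l}(f)$, so $\tilde{\mu}^{x}_{l}(f)\geq_{f}S$ would force $\tilde{\mu}^{x}_{m}(f)\geq_{f}S$ by transitivity, a contradiction. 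Hence the $x$-cumulative at $S$ is at most $\gamma_{1}+\cdots+\gamma_{m-1}$, strictly below the $y$-cumulative, so the inequality of Definition \ref{defino orden estocastico} fails at the $y$-term $\tilde{\mu}^{y}_{m}(f)$ and $x\not\succeq_{f}y$. Thus $x\succeq_{f}y$ forces term-by-term domination for $f$, and ranging over $f$ gives $x\succeq_{F}^{S}y$.

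The stochastic-dominance bookkeeping is routine; the step that needs care is the reduction of the first paragraph, namely that the cumulative sums entering Definition \ref{defino orden estocastico} are insensitive to the passage from the strictly decreasing representation of Theorem \ref{teorema del orden} to the merely weakly decreasing split representation of Proposition \ref{proposicion reescribir con algoritmo}. This is where the Remark and the fact that splitting only subdivides a fixed matching's weight are essential. Beyond this the only structural input is transitivity of the per-firm revealed preference $\geq_{f}$, which is already available since $\geq_{F}$ is a partial order on $\mathcal{S(P)}$; in particular no totality of the per-firm stable partner sets is required, because the contrapositive compares partner sets only along a single decreasing chain. The worker case is obtained verbatim upon replacing $\geq_{f}$ and the firm representations by $\geq_{w}$ and the worker representations, giving $\succeq_{W}\Leftrightarrow\succeq_{W}^{S}$.
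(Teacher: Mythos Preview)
Your proof is correct and follows essentially the same approach as the paper's: for $\succeq_F^S\Rightarrow\succeq_F$ you use the set containment $\{l:\tilde\mu^y_l(f)\geq_f S\}\subseteq\{l:\tilde\mu^x_l(f)\geq_f S\}$ via transitivity, and for $\succeq_F\Rightarrow\succeq_F^S$ you take the least index where termwise domination fails and derive the cumulative-sum contradiction $\sum_{\ell=1}^{m-1}\gamma_\ell\geq\sum_{\ell=1}^{m}\gamma_\ell$, exactly as the paper does. If anything you are slightly more careful than the paper in two places: you explicitly justify evaluating $\succeq_f$ on the split (merely weakly decreasing) representation via the Remark, and you phrase the failure as $\tilde\mu^x_m(f)\not\geq_f\tilde\mu^y_m(f)$ rather than $\tilde\mu^x_m(f)<_f\tilde\mu^y_m(f)$, avoiding any tacit appeal to comparability of stable partner sets.
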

\begin{proof}
See proof in Appendix \ref{apendice B}.
\end{proof}
\begin{corollary}
The domination relation $\succeq_F^S$ is also a partial order.
\end{corollary}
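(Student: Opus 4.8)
The plan is to obtain this corollary as a purely formal consequence of the two results immediately preceding it: Proposition \ref{Es orden parcial}, which establishes that $\succeq_F$ is a partial order, and Proposition \ref{equivalencia de ordenes}, which establishes that $\succeq_F$ and $\succeq_F^S$ are equivalent. The key point is that ``equivalent'' here means the two relations coincide as subsets of $\mathcal{RS(P)} \times \mathcal{RS(P)}$: for every pair of random stable matchings $x$ and $y$ one has $x \succeq_F y$ if and only if $x \succeq_F^S y$. Since the three defining properties of a partial order (reflexivity, antisymmetry, transitivity) are expressed entirely in terms of which pairs belong to the relation, any relation equivalent to a partial order is itself a partial order.

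First I would translate each axiom through the equivalence. For reflexivity, Proposition \ref{Es orden parcial} gives $x \succeq_F x$ for every $x$, and the equivalence yields $x \succeq_F^S x$. For antisymmetry, assuming $x \succeq_F^S y$ and $y \succeq_F^S x$, the equivalence produces $x \succeq_F y$ and $y \succeq_F x$, whence $x=y$ by antisymmetry of $\succeq_F$. For transitivity, from $x \succeq_F^S y$ and $y \succeq_F^S z$ the equivalence gives $x \succeq_F y$ and $y \succeq_F z$, so $x \succeq_F z$ by transitivity of $\succeq_F$, and translating back delivers $x \succeq_F^S z$. Thus all three properties transfer verbatim.

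There is essentially no obstacle here: the substantive content was already carried by Proposition \ref{equivalencia de ordenes}, and what remains is a one-line deduction. The only point requiring a moment's care is to confirm that the equivalence in Proposition \ref{equivalencia de ordenes} is a genuine biconditional valid for arbitrary pairs of random stable matchings---not merely for those already presented in split form---so that the back-and-forth translation of each axiom is legitimate. Since any two random stable matchings can be put through the splitting procedure by Proposition \ref{proposicion reescribir con algoritmo}, this holds, and the corollary follows.
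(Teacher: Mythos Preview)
Your proposal is correct and matches the paper's intent: the paper states the corollary with no proof, treating it as an immediate consequence of Proposition~\ref{Es orden parcial} and Proposition~\ref{equivalencia de ordenes}, which is exactly the deduction you spell out. Your additional remark about needing the equivalence to hold for arbitrary pairs (not just those already in split form) is a sensible point of care, though the paper does not make it explicit.
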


\section{Main result}\label{seccion main result}
Given two random stable matchings represented after the splitting procedure, we define binary operations for random stable matchings to compute the \textit{l.u.b.} and \textit{g.l.b.} for each side of the market. Further, we state the main result of this paper: the set of random stable matchings has a dual lattice structure.

Recall that $\vee_W$, $\wedge_W$, $\vee_F$ and $\wedge_F$ are the binary operations relative to the partial orders $\geq_{W}$ and $\geq_{F}$  defined between two (deterministic) stable matchings. Now, we extend  these binary operations to random stable matchings. Formally,

\begin{definition}\label{defino operaciones binarias}
Let $x$ and $y$ be two random stable matchings such that 
$$x=\sum_{\ell=1}^{\tilde{k}}\gamma_{\ell}\tilde{\mu}^{x}_{\ell} \text{~~~~ and ~~~~}y=\sum_{\ell=1}^{\tilde{k}}\gamma_{\ell}\tilde{\mu}^{y}_{\ell} $$ where for each $\ell=1,\ldots,\tilde{k}$,~~ $0< \gamma_{\ell} \leq 1$, $\sum_{\ell=1}^{\tilde{k}}\gamma_{\ell}=1$,  $\tilde{\mu}^{x}_{\ell},\tilde{\mu}^{y}_{\ell}\in{\mathcal{S(P)}}$, and for each $\ell=1,\ldots,\tilde{k}-1$,~~ $\tilde{\mu}^{x}_{\ell} \geq_{F} \tilde{\mu}^{x}_{\ell+1}$ and  $\tilde{\mu}^{y}_{\ell} \geq_{F} \tilde{\mu}^{y}_{\ell+1}$.

Define $\boldsymbol{x\veebar_F y,~x \barwedge_F y,~x\veebar_W y \text{ and }x \barwedge_W y}$ as follows:
$$\boldsymbol{x\veebar_F y}:=\sum_{\ell=1}^{\tilde{k}}\gamma_{\ell}(\tilde{\mu}^{x}_{\ell}\vee_F \tilde{\mu}^{y}_{\ell}) \text{~~,~~~~~} \boldsymbol{x \barwedge_F y}:=\sum_{\ell=1}^{\tilde{k}}\gamma_{\ell}(\tilde{\mu}^{x}_{\ell} \wedge_F \tilde{\mu}^{y}_{\ell}),$$ 

and
$$\boldsymbol{x\veebar_W y}:=\sum_{\ell=1}^{\tilde{k}}\gamma_{\ell}(\tilde{\mu}^{x}_{\ell}\vee_W \tilde{\mu}^{y}_{\ell}) \text{~~,~~~~~} \boldsymbol{x \barwedge_W y}:=\sum_{\ell=1}^{\tilde{k}}\gamma_{\ell}(\tilde{\mu}^{x}_{\ell} \wedge_W \tilde{\mu}^{y}_{\ell}),$$

\end{definition}

\begin{remark}
It is straightforward by Remark \ref{operaciones en matching es estable} that $x\veebar_F y,~x \barwedge_F y,~x\veebar_W y \text{ and }x \barwedge_W y$ are random stable matchings.
\end{remark}
Now, we are in position to prove that these binary operations defined for random stable matchings are actually the \textit{l.u.b.} and \textit{g.l.b.} for each side of the market.  
\begin{proposition}\label{teorema de operaciones binarias}
 Let $x$ and $y$ be two random stable matchings. Then, for $X\in \{F,W\}$ we have that
$$x\veebar_X y =\textit{l.u.b.}_{\succeq_X}(x,y) \text{ ~~and~~ } x\barwedge_X y =\textit{g.l.b.}_{\succeq_X}(x,y).$$ Also, 
$$x\veebar_F y =x\barwedge_W y  \text{ ~~and~~ } x\veebar_W y =x\barwedge_F y.$$
\end{proposition}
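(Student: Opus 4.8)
The plan is to reduce every claim to the term-by-term (deterministic) lattice structure of $\mathcal{S(P)}$, using the splitting procedure together with the equivalence $\succeq_F \Leftrightarrow \succeq_F^S$ from Proposition \ref{equivalencia de ordenes}. Throughout I would work with a \emph{common decreasing splitting}: given finitely many random stable matchings, view each one (in its unique decreasing representation from Theorem \ref{teorema del orden}) as a step function on $[0,1]$ taking stable-matching values arranged $\geq_F$-best first, and then refine all the breakpoints simultaneously. This produces a single scalar sequence $\{\rho_r\}$ together with decreasing representations of all the matchings sharing this sequence. First I would record that subdividing a term of a decreasing representation into several copies changes neither the random stable matching nor any of its pairwise $\geq_F$-relations; hence, by Proposition \ref{equivalencia de ordenes}, whenever two random stable matchings are written over a common decreasing splitting one has $x\succeq_F y$ if and only if the associated matchings satisfy $\sigma^x_r\geq_F\sigma^y_r$ for every $r$, and moreover $x\veebar_F y=\sum_r\rho_r(\sigma^x_r\vee_F\sigma^y_r)$ and $x\barwedge_F y=\sum_r\rho_r(\sigma^x_r\wedge_F\sigma^y_r)$ independently of the particular common splitting chosen, since duplicated terms contribute identical joins and meets.

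The two duality identities are then immediate. Writing $x$ and $y$ over the common $F$-splitting of Definition \ref{defino operaciones binarias}, the deterministic identities $\tilde\mu^x_\ell\vee_F\tilde\mu^y_\ell=\tilde\mu^x_\ell\wedge_W\tilde\mu^y_\ell$ and $\tilde\mu^x_\ell\vee_W\tilde\mu^y_\ell=\tilde\mu^x_\ell\wedge_F\tilde\mu^y_\ell$ (Remark \ref{operaciones en matching es estable}) hold term by term, so summing against $\gamma_\ell$ gives $x\veebar_F y=x\barwedge_W y$ and $x\veebar_W y=x\barwedge_F y$.

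For the $F$-side I would prove that $x\veebar_F y$ is the $\succeq_F$-least upper bound of $x$ and $y$, the statement for $x\barwedge_F y$ being dual. That it is an upper bound: since $\vee_F$ is monotone on the deterministic lattice, $\tilde\mu^x_\ell\vee_F\tilde\mu^y_\ell$ is again $\geq_F$-decreasing in $\ell$ and dominates both $\tilde\mu^x_\ell$ and $\tilde\mu^y_\ell$ term by term, so $x\veebar_F y\succeq_F^S x$ and $x\veebar_F y\succeq_F^S y$, whence $x\veebar_F y\succeq_F x$ and $x\veebar_F y\succeq_F y$. For minimality, let $z$ be any common upper bound, so $z\succeq_F x$ and $z\succeq_F y$. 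I would take a common decreasing splitting of $x$, $y$ and $z$ simultaneously, with scalars $\rho_r$ and matchings $\sigma^x_r,\sigma^y_r,\sigma^z_r$. By the characterization above, $\sigma^z_r\geq_F\sigma^x_r$ and $\sigma^z_r\geq_F\sigma^y_r$ for every $r$; since $\sigma^x_r\vee_F\sigma^y_r$ is the deterministic least upper bound of $\sigma^x_r$ and $\sigma^y_r$, this forces $\sigma^z_r\geq_F\sigma^x_r\vee_F\sigma^y_r$ for every $r$. Because $x\veebar_F y=\sum_r\rho_r(\sigma^x_r\vee_F\sigma^y_r)$ over this same (refined) splitting, the term-by-term domination yields $z\succeq_F x\veebar_F y$, as required.

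Finally, the $W$-side statements follow by the symmetry of the market under exchanging $F$ and $W$ (both sides have substitutable preferences satisfying the L.A.D.), by running the identical argument with $\succeq_W$, $\vee_W$, $\wedge_W$ and the $\geq_W$-decreasing representations. The one point to verify is that $\veebar_W$ and $\barwedge_W$ of Definition \ref{defino operaciones binarias}, which are written over the $\geq_F$-decreasing splitting, agree with the same operations computed over the $\geq_W$-decreasing splitting required by the symmetric argument; this holds because by polarization $\mu\geq_F\mu'\Leftrightarrow\mu'\geq_W\mu$ the $\geq_W$-decreasing representation is exactly the reverse of the $\geq_F$-decreasing one, so after passing to a sufficiently fine common refinement the two computations pair up the identical collection of triples $(\rho_r,\sigma^x_r,\sigma^y_r)$ and return the same sum. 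I expect the main obstacle to be precisely this bookkeeping: making rigorous that the term-by-term characterization of $\succeq_F$ and the values of $\veebar_F,\barwedge_F$ are invariant under the choice of common decreasing splitting (so that a single refinement of $x,y,z$ serves all three relations at once), together with the $F$/$W$ consistency just described. Everything else reduces to the monotonicity and the l.u.b./g.l.b. property of the already-established deterministic lattice on $\mathcal{S(P)}$.
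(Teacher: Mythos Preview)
Your proposal is correct and follows essentially the same route as the paper: reduce everything to term-by-term comparisons in the deterministic lattice via the splitting procedure and the equivalence $\succeq_F\Leftrightarrow\succeq_F^S$ from Proposition~\ref{equivalencia de ordenes}, then invoke the l.u.b./g.l.b.\ property and duality of $\vee_F,\wedge_F,\vee_W,\wedge_W$ on $\mathcal{S(P)}$. If anything you are more careful than the paper, which in the minimality step simply asserts $\tilde{\mu}^{z}_{\ell}\geq_X\tilde{\mu}^{x}_{\ell}$ and $\tilde{\mu}^{z}_{\ell}\geq_X\tilde{\mu}^{y}_{\ell}$ without saying how a \emph{common} splitting of all three of $x,y,z$ is obtained (Algorithm~2 only treats pairs); your simultaneous-refinement and invariance argument is precisely what is needed to make that step, and the $F/W$ consistency you flag, rigorous.
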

\begin{proof}
See proof in Appendix \ref{apendice B}.
\end{proof}

Now, we are in position to state the main result as follows,
\begin{theorem}\label{teorema principal}
$(\mathcal{RS(P)},\succeq_F,\veebar_F,\barwedge_F)$ and $(\mathcal{RS(P)},\succeq_W,\veebar_W,\barwedge_W)$ are dual lattices. 
\end{theorem}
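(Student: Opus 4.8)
The plan is to show that each of the two structures satisfies the definition of a lattice with respect to the stated partial order, and then that they are dual to one another. Since Proposition \ref{Es orden parcial} already establishes that $\succeq_F$ (and, analogously, $\succeq_W$) is a partial order on $\mathcal{RS(P)}$, the only thing left to verify for the lattice property is that every pair of elements has a least upper bound and a greatest lower bound in $\mathcal{RS(P)}$, and that these are computed by $\veebar_F$ and $\barwedge_F$ respectively (and similarly on the worker side). But this is exactly the content of Proposition \ref{teorema de operaciones binarias}: for $X \in \{F,W\}$, given any two random stable matchings $x$ and $y$, the element $x \veebar_X y$ is the $\succeq_X$-l.u.b.\ of $x$ and $y$ and $x \barwedge_X y$ is the $\succeq_X$-g.l.b. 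Since $x \veebar_X y$ and $x \barwedge_X y$ are again random stable matchings (by the Remark following Definition \ref{defino operaciones binarias}, via Remark \ref{operaciones en matching es estable}), both operations are internal to $\mathcal{RS(P)}$.

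First I would argue that this delivers the lattice axioms directly. The operations $\veebar_X$ and $\barwedge_X$ are well defined on all of $\mathcal{RS(P)}$: given arbitrary $x,y$, one first puts each into its unique decreasing representation guaranteed by Theorem \ref{teorema del orden}, then applies the splitting procedure of Proposition \ref{proposicion reescribir con algoritmo} so that $x$ and $y$ share common scalars term by term, and finally applies Definition \ref{defino operaciones binarias}. Because the outputs are characterized as the l.u.b.\ and g.l.b.\ by Proposition \ref{teorema de operaciones binarias}, the standard lattice identities (idempotency, commutativity, associativity, and absorption) follow automatically from the general fact that in any poset where all binary meets and joins exist, the meet and join operations satisfy precisely those axioms. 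Thus $(\mathcal{RS(P)},\succeq_F,\veebar_F,\barwedge_F)$ is a lattice, and by the symmetric statement of Proposition \ref{teorema de operaciones binarias} for $X=W$, so is $(\mathcal{RS(P)},\succeq_W,\veebar_W,\barwedge_W)$.

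It remains to establish duality between the two lattices. The cleanest route is to use the second pair of identities in Proposition \ref{teorema de operaciones binarias}, namely $x \veebar_F y = x \barwedge_W y$ and $x \veebar_W y = x \barwedge_F y$. These identities say that the join in the firm-lattice equals the meet in the worker-lattice and vice versa, which is exactly the assertion that the two lattices are order-duals: the join operation of one coincides with the meet operation of the other on the same underlying set $\mathcal{RS(P)}$. I would make this precise by noting that these identities already sit at the deterministic level (the displayed defining equations $\mu_1 \vee_F \mu_2 = \mu_1 \wedge_W \mu_2$ in Section \ref{prelimirany}) and that Definition \ref{defino operaciones binarias} lifts them linearly, term by term over the common splitting, to the random setting. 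One should also record that $x \succeq_F y$ holds if and only if $y \succeq_W x$, so that the reversal of the order is matched by the interchange of the operations.

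The main obstacle I anticipate is not in this final theorem itself, which is essentially a bookkeeping assembly of the earlier results, but in making sure the lattice axioms are genuinely inherited. Specifically, associativity of $\veebar_F$ requires comparing $(x \veebar_F y) \veebar_F z$ with $x \veebar_F (y \veebar_F z)$, and here the splitting procedure must be applied consistently to all three matchings at once; one must check that the common-scalar decomposition underlying Definition \ref{defino operaciones binarias} is compatible across the two different groupings. The safe way to handle this, and the route I would take, is to bypass the explicit operational identities and invoke the abstract characterization from Proposition \ref{teorema de operaciones binarias}: once $\veebar_X$ and $\barwedge_X$ are identified as the poset-theoretic join and meet, all lattice axioms hold by general order theory with no need to re-expand the splitting, so the potential subtlety about compatibility of splittings never has to be confronted directly.
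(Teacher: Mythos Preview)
Your proposal is correct and matches the paper's approach: the paper gives no explicit proof of Theorem~\ref{teorema principal} at all, treating it as an immediate corollary of Proposition~\ref{Es orden parcial} (partial order), the Remark after Definition~\ref{defino operaciones binarias} (closure in $\mathcal{RS(P)}$), and Proposition~\ref{teorema de operaciones binarias} (the operations compute the l.u.b.\ and g.l.b., plus the identities $\veebar_F=\barwedge_W$ and $\veebar_W=\barwedge_F$). Your write-up simply makes explicit the standard order-theoretic step that once binary joins and meets exist the lattice axioms follow, and you correctly observe that the order-duality $x\succeq_F y \Leftrightarrow y\succeq_W x$ is forced by the operation-level duality together with the fact that each order is recovered from its own join/meet; this is implicit in the paper but not spelled out.
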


The following example illustrate how to compute the binary operations for two random stable matchings.

\noindent \textbf{Example 1 (Continued)}\textit{ Given $x$ and $y$ represented as in Proposition \ref{proposicion reescribir con algoritmo}, we compute $x\veebar_F y$ and  $x\barwedge_F y$ as follows: (The other two cases are similar)}
\begin{center}
$
x=\frac{1}{6}x^{\nu_1}+\frac{1}{12}x^{\nu_1}+\frac{5}{12}x^{\nu_2}+\frac{1}{12}x^{\nu_2}+\frac{1}{4}x^{\nu_4},
$
\end{center}
\begin{center}
$y=\frac{1}{6}x^{\nu_1}+\frac{1}{12}x^{\nu_3}+\frac{5}{12}x^{\nu_3}+\frac{1}{12}x^{\nu_4}+\frac{1}{4}x^{\nu_4}.
$
\end{center}
\begin{center}
$
x\veebar_F y=\frac{1}{6}x^{\nu_1 \vee_F \nu_1}+\frac{1}{12}x^{\nu_1 \vee_F \nu_3}+\frac{5}{12}x^{\nu_2 \vee_F \nu_3}+\frac{1}{12}x^{\nu_2 \vee_F \nu_4}+\frac{1}{4}x^{\nu_4 \vee_F \nu_4} 
$
\end{center}
\begin{center}
$
=\frac{1}{6}x^{\nu_1}+\frac{1}{12}x^{\nu_1}+\frac{5}{12}x^{\nu_1}+\frac{1}{12}x^{\nu_2}+\frac{1}{4}x^{\nu_4},
$
\end{center}
\begin{center}
$
=\frac{2}{3}x^{\nu_1}+\frac{1}{12}x^{\nu_2 }+\frac{1}{4}x^{\nu_4 }. 
$
\end{center}
\begin{center}
$
x\barwedge_F y=\frac{1}{6}x^{\nu_1 \wedge_F \nu_1}+\frac{1}{12}x^{\nu_1 \wedge_F \nu_3}+\frac{5}{12}x^{\nu_2 \wedge_F \nu_3}+\frac{1}{12}x^{\nu_2 \wedge_F \nu_4}+\frac{1}{4}x^{\nu_4 \wedge_F \nu_4}
$
\end{center}
\begin{center}
$
=\frac{1}{6}x^{\nu_1}+\frac{1}{12}x^{\nu_3}+\frac{5}{12}x^{\nu_4}+\frac{1}{12}x^{\nu_4}+\frac{1}{4}x^{\nu_4},
$
\end{center}
\begin{center}
$
=\frac{1}{6}x^{\nu_1}+\frac{1}{12}x^{\nu_3 }+\frac{3}{4}x^{\nu_4}.
$
\end{center}

\subsection{Binary operations for rational random stable matchings}\label{rational}

In this subsection, we compute the \textit{g.l.b.} and \textit{l.u.b.} for two random stable matchings where each scalar of the lottery is a \textit{rational number}. These random stable matchings are called \textbf{rational random stable matchings}. In this case, the splitting procedure is directly and different to the procedure described by Algorithm 2.

Let $x$ and $y$ be two rational random stable matchings, represented as follows:

\begin{equation}\label{alpha racional}
 x=\sum_{i=1}^{I}\alpha_{i}\mu^{x}_{i},
\end{equation}
 such that, $0< \alpha_{i} \leq 1$ for each $i=1,\ldots,I$; $\sum_{i=1}^{I}\alpha_{i}=1$, $\mu^{x}_{i}\in{\mathcal{S(P)}}$, $\alpha_{i}$ is a rational number and $\mu^{x}_{i} >_{F} \mu^{x}_{i+1}$; where  $i=1,...,I-1.$
\begin{equation}\label{beta racional}
 y=\sum_{j=1}^{J}\beta_{j}\mu^{y}_{j},
\end{equation}
such that, $0< \beta_{j} \leq 1$ for each $j=1,\ldots,J$; $\sum_{j\in{J}}\beta_{j}=1$, $\mu^{y}_{j}\in{\mathcal{S(P)}}$, $\beta_{j}$ is a rational number and $\mu^{y}_{j} >_{F} \mu^{y}_{j+1}$; where $j=1,...,J-1$.

Since $\alpha_i$ and $\beta_j$ are positive rational numbers, we have that for each $\alpha_i$ there are natural numbers $a_i,~b_i$ such that $\alpha_i=\frac{a_i}{b_i}$. Similarly, for each $\beta_j$ there are natural numbers $c_j,~d_j$ such that $\beta_j=\frac{c_j}{d_j}.$

Denote by $e$ the \textit{least common multiple} (\textbf{\textit{lcm}}) of all denominators $b_i,~d_j$ for each $i=1,\ldots, I$ and for each $j=1,\ldots, J$. That is,
$$
e=\textit{lcm}(b_1,\ldots,b_I,d_1,\ldots,d_J).
$$

Then, we can write $\alpha_i=\frac{a_i}{b_i}=\frac{a_i \frac{e}{b_i}}{e}$ and $\beta_i=\frac{c_j}{d_j}=\frac{c_j \frac{e}{d_j}}{e}$ for each $i=1,\ldots, I$ and for each $j=1,\ldots, J.$ Hence, we can write all the scalars $\alpha$ and $\beta$ with the same denominator.

Denote by $\gamma_k=\frac{1}{e}$ and define 

$$
\boldsymbol{\tilde{\mu}_k^{x}}:=\left\{
\begin{array}{ll}
\mu^x_{1} & \text{~for~} k=1,\ldots,\frac{a_1}{b_1}e\\
\mu^x_{2} & \text{~for~} k=\frac{a_1}{b_1}e+1,\ldots,\left(\frac{a_2}{b_2}+\frac{a_1}{b_1}\right) e\\
\vdots & ~~~\vdots\\
\mu^x_{I} & \text{~for~} k=\left(\displaystyle\sum_{n=1}^{I-1}\frac{a_{n}}{b_{n}}\right) e+1,\ldots,\left(\displaystyle\sum_{n=1}^{I}\frac{a_{n}}{b_{n}}\right) e\\
\end{array}
\right.
$$

$$
\boldsymbol{\tilde{\mu}_k^{y}}:=\left\{
\begin{array}{ll}
\mu^y_{1} & \text{~for~} k=1,\ldots,\frac{c_1}{d_1}e\\
\mu^y_{2} & \text{~for~} k=\frac{c_1}{d_1}e+1,\ldots,\left(\frac{c_2}{d_2}+\frac{c_1}{d_1}\right) e\\
\vdots & ~~~\vdots\\
\mu^y_{J} & \text{~for~} k=\left(\displaystyle\sum_{m=1}^{J-1}\frac{c_{m}}{d_{m}}\right) e+1,\ldots,\left(\displaystyle\sum_{m=1}^{J}\frac{c_{m}}{d_{m}}\right) e\\
\end{array}
\right.
$$
Then, we have that 
\begin{equation}\label{equacion racional x}
x=\sum_{i=1}^{I}\alpha_{i}\mu^{x}_{i}=\sum_{i=1}^{I}\frac{a_i}{b_i}\mu^{x}_{i}=\sum_{i=1}^{I}\frac{a_i \frac{e}{b_i}}{e}\mu^{x}_{i}=\sum_{k=1}^{e}\frac{1}{e}\tilde{\mu}^x_k.
\end{equation}
Analogously, we have that 
\begin{equation}\label{equacion racional y}
y=\sum_{j=1}^{J}\beta_{j}\mu^{y}_{j}=\sum_{j=1}^{J}\frac{c_j}{d_j}\mu^{y}_{j}=\sum_{j=1}^{J}\frac{c_j \frac{e}{d_j}}{e}\mu^{y}_{j}=\sum_{k=1}^{e}\frac{1}{e}\tilde{\mu}^y_k.
\end{equation}

Given two rational random stable matchings $x$ and $y$, represented as in (\ref{equacion racional x}) and (\ref{equacion racional y}), to compute $x\veebar_F y$, $x \barwedge_F y$, $x\veebar_W y$ and $x \barwedge_W y$ we state the following corollary of Theorem \ref{teorema de operaciones binarias}.
\begin{corollary}\label{corolario para rational}
Let $x$ and $y$ be two rational random stable matchings (i.e. each $\alpha$ and each $\beta$ in (\ref{alpha racional}) and (\ref{beta racional}) are rational numbers). Then, for $X\in \{F,W\}$ we have that
$$x\veebar_X y=\sum_{k=1}^e\frac{1}{e}(\tilde{\mu}^{x}_{k}\vee_X \tilde{\mu}^{y}_{k}) \text{~~~~  and ~~~~} x \barwedge_X y=\sum_{k=1}^e\frac{1}{e}(\tilde{\mu}^{x}_k \wedge_X \tilde{\mu}^{y}_k).$$ 
\end{corollary}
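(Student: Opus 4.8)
The plan is to exhibit the $1/e$-construction of \eqref{equacion racional x} and \eqref{equacion racional y} as a legitimate common-scalar decreasing representation of $x$ and $y$, and then to show that the term-by-term binary operation is insensitive to which such representation one uses, so that it agrees with the value $x\veebar_X y$ prescribed by Definition \ref{defino operaciones binarias} through the splitting procedure (Algorithm 2).

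First I would check that the family $\{(1/e,\tilde\mu^x_k,\tilde\mu^y_k):k=1,\dots,e\}$ satisfies the hypotheses of Definition \ref{defino operaciones binarias}. The identities \eqref{equacion racional x} and \eqref{equacion racional y} already give $x=\sum_{k=1}^e\frac1e\tilde\mu^x_k$ and $y=\sum_{k=1}^e\frac1e\tilde\mu^y_k$ with a common scalar $\gamma_k=1/e$ summing to $1$. The decreasing property is immediate from the block definition of $\tilde\mu^x_k$: on each block the value is constant, and when $k$ crosses a block boundary the value drops from $\mu^x_n$ to $\mu^x_{n+1}$ with $\mu^x_n>_F\mu^x_{n+1}$; hence $\tilde\mu^x_k\ge_F\tilde\mu^x_{k+1}$ for every $k$, and likewise $\tilde\mu^y_k\ge_F\tilde\mu^y_{k+1}$. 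Thus the right-hand sides of the corollary are well-formed instances of the operations in Definition \ref{defino operaciones binarias}.

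The core step is to connect this representation to the one produced by Algorithm 2. Write the splitting representation of Proposition \ref{proposicion reescribir con algoritmo} as $x=\sum_{\ell=1}^{\tilde k}\gamma_\ell\hat\mu^x_\ell$ and $y=\sum_{\ell=1}^{\tilde k}\gamma_\ell\hat\mu^y_\ell$ (I write $\hat\mu$ here to avoid clashing with the $1/e$-construction notation $\tilde\mu$), so that by definition $x\veebar_X y=\sum_{\ell}\gamma_\ell(\hat\mu^x_\ell\vee_X\hat\mu^y_\ell)$. Since every $\alpha_i$ and $\beta_j$ is, by the choice of $e$, an integer multiple of $1/e$, and every scalar $\gamma_\ell$ generated by Algorithm 2 arises from the $\alpha_i,\beta_j$ by successive minima and subtractions, an easy induction shows that each $\gamma_\ell$ is again an integer multiple of $1/e$, say $\gamma_\ell=n_\ell/e$. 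Consequently both representations have all their partial sums $\sum_{\ell\le m}\gamma_\ell$ on the grid $\{0,1/e,2/e,\dots,1\}$, so the uniform partition into $e$ equal pieces refines the partition determined by the splitting: the $\ell$-th splitting block of length $\gamma_\ell$ breaks into exactly $n_\ell$ consecutive pieces of length $1/e$, on each of which the matchings are $\hat\mu^x_\ell$ and $\hat\mu^y_\ell$. Because the construction of \eqref{equacion racional x} and \eqref{equacion racional y} lists the matchings of $x$ (resp.\ $y$) in the same $\ge_F$-decreasing order and with the same total masses, these pieces coincide, after reindexing by $k=1,\dots,e$, with the pairs $(\tilde\mu^x_k,\tilde\mu^y_k)$.

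Finally I would observe that replacing a single term $\gamma_\ell(\hat\mu^x_\ell\vee_X\hat\mu^y_\ell)$ by $n_\ell$ copies of $\tfrac1e(\hat\mu^x_\ell\vee_X\hat\mu^y_\ell)$ leaves the sum unchanged, since $\gamma_\ell=n_\ell/e$ and the matching pair is the same. Performing this subdivision in every block and reindexing yields
\[
x\veebar_X y=\sum_{\ell=1}^{\tilde k}\gamma_\ell(\hat\mu^x_\ell\vee_X\hat\mu^y_\ell)=\sum_{k=1}^{e}\frac1e(\tilde\mu^x_k\vee_X\tilde\mu^y_k),
\]
and the identical computation with $\wedge_X$ in place of $\vee_X$ gives the formula for $x\barwedge_X y$; the cases $X=F$ and $X=W$ are handled simultaneously because Definition \ref{defino operaciones binarias} uses the same $\ge_F$-decreasing representation for all four operations. (Alternatively, once the $1/e$-construction is recognized as a valid decreasing representation, the equality follows from Proposition \ref{teorema de operaciones binarias}, both sums being the representation-independent $\textit{l.u.b.}_{\succeq_X}(x,y)$.) The main obstacle is the refinement step: one must verify that all scalars emitted by Algorithm 2 remain multiples of $1/e$, so that the uniform grid genuinely refines the splitting partition and the two block sequences align term by term.
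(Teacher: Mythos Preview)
Your argument is correct. The paper actually states the corollary without proof, presenting it simply as a consequence of Proposition~\ref{teorema de operaciones binarias}; the implicit reasoning is the one you record parenthetically at the end: the $1/e$-decomposition of \eqref{equacion racional x}--\eqref{equacion racional y} is itself a common-scalar $\ge_F$-decreasing representation of $x$ and $y$ satisfying the hypotheses of Definition~\ref{defino operaciones binarias}, so Proposition~\ref{teorema de operaciones binarias} identifies the term-by-term expression with the unique $\textit{l.u.b.}_{\succeq_X}(x,y)$ (resp.\ $\textit{g.l.b.}_{\succeq_X}(x,y)$), and this forces agreement with $x\veebar_X y$ (resp.\ $x\barwedge_X y$).

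Your main argument takes a more concrete route: you show directly that every scalar $\gamma_\ell$ produced by Algorithm~2 is an integer multiple of $1/e$, so the uniform $1/e$-grid refines the splitting partition and the two block sequences of matching pairs coincide after subdivision. This is a perfectly valid alternative that avoids appealing to uniqueness of the least upper bound and instead establishes representation-independence of the term-by-term formula directly. The refinement step is sound---closure of $\tfrac{1}{e}\mathbb{Z}$ under $\min$ and subtraction gives the needed induction---and the alignment of blocks follows because both the $1/e$-construction and Algorithm~2 list the matchings of $x$ (resp.\ $y$) in the same decreasing order with the same cumulative masses. Either route suffices; the paper's implicit argument is shorter, while yours is more self-contained.
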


\noindent \textbf{Example 1 (Continued)}\textit{ Let $x$ and $y$ be two random stable matchings represented as in Proposition \ref{teorema del orden},}
\begin{center}
$x=\frac{1}{4}x^{\nu_1}+\frac{1}{2}x^{\nu_2}+\frac{1}{4}x^{\nu_4},
$
\end{center}
\begin{center}
$
y=\frac{1}{6}x^{\nu^1}+\frac{1}{2}x^{\nu^3}+\frac{1}{3}x^{\nu^4}.
$
\end{center}

\textit{Let $e=\textit{lcm}(2,3,4,6)=12$. Then, the random stable matchings $x$ and $y$ can be represented as:}

\begin{center}
$
x=\frac{1}{12}x^{\nu_1}+\frac{1}{12}x^{\nu_1}+\frac{1}{12}x^{\nu_1}+\frac{1}{12}x^{\nu_2}+\frac{1}{12}x^{\nu_2}+\frac{1}{12}x^{\nu_2}+\frac{1}{12}x^{\nu_2}+\frac{1}{12}x^{\nu_2}
$
\end{center}
\begin{center}
$+\frac{1}{12}x^{\nu_2}+\frac{1}{12}x^{\nu_4}+\frac{1}{12}x^{\nu_4}+\frac{1}{12}x^{\nu_4},
$
\end{center}
\begin{center}
$
y=\frac{1}{12}x^{\nu_1}+\frac{1}{12}x^{\nu_1}+\frac{1}{12}x^{\nu_3}+\frac{1}{12}x^{\nu_3}+\frac{1}{12}x^{\nu_3}+\frac{1}{12}x^{\nu_3}+\frac{1}{12}x^{\nu_3}+\frac{1}{12}x^{\nu_3}
$
\end{center}
\begin{center}
$
+\frac{1}{12}x^{\nu_4}+\frac{1}{12}x^{\nu_4}+\frac{1}{12}x^{\nu_4}+\frac{1}{12}x^{\nu_4}.
$
\end{center}
\textit{Then, }

\begin{center}
$
x\veebar_F y=\frac{1}{12}x^{\nu_1}+\frac{1}{12}x^{\nu_1}+\frac{1}{12}x^{\nu_1}+\frac{1}{12}x^{\nu_1}+\frac{1}{12}x^{\nu_1}+\frac{1}{12}x^{\nu_1}+\frac{1}{12}x^{\nu_1}
$
\end{center}
\begin{center}
$
+\frac{1}{12}x^{\nu_1}+\frac{1}{12}x^{\nu_2}+\frac{1}{12}x^{\nu_4}+\frac{1}{12}x^{\nu_4}+\frac{1}{12}x^{\nu_4}
$
\end{center}
\begin{center}
$
=\frac{2}{3}x^{\nu_1}+\frac{1}{12}x^{\nu_2 }+\frac{1}{4}x^{\nu_4 }. 
$
\end{center}
\textit{Analogously for $x\barwedge_F y,~x\veebar_W y$ and $x\barwedge_W y.$}
\section{Concluding remarks}\label{conclusiones}
In this paper, we prove an important result that involves two very much studied topics in the matching literature: random stable matchings and lattice structure. The many-to-many matching markets with substitutable preferences satisfying the L.A.D. are the most general matching markets in which it is known that the binary operations between two stable matchings (\textit{l.u.b.} and \textit{g.l.b.}) are computed via pointing functions.  For these markets, we prove that the set of random stable matchings endowed with a partial order has a dual lattice structure. Moreover, we present natural binary operations to compute \textit{l.u.b.} and \textit{g.l.b.} between two random stable matching for each side of the markets.  The partial order defined in this paper is a generalization of the first-order stochastic dominance for the case in which agents have substitutable preferences satisfying the L.A.D..
For more general matching markets, for instance markets that only satisfy substitutability (not L.A.D.), the binary operations between (deterministic) stable matchings are computed as fixed points. Then, the lattice structure of the set of random stable matchings for these markets is still an open problem, left for future research.

\newpage
\section*{Appendix}
\appendix

\section{The decreasing representation}\label{apendice A}
The following technical results are used in the proof of Theorem \ref{teorema del orden}.
\begin{lemma}\label{mu k en Bk}
$\mu_k\in B_k$ for each $k=1,\ldots,\tilde{k}$.
\end{lemma}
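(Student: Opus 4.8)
The plan is to prove the statement by induction on $k$, after first reducing membership in $B_k$ to membership in the initial set $B_1$. Throughout I write $\bigvee_{\nu\in S}{}_F\,\nu$ for the $\geq_F$-least upper bound of $S\subseteq\mathcal S(P)$, which by Remark~\ref{operaciones en matching es estable} is again a stable matching. Unwinding the recursion $B_{m+1}=B_m\setminus C_m$ together with the definition of $C_m$ shows that
\[
B_k=\{\nu\in B_1 : j\notin\nu(i)\text{ for every }(i,j)\in\mathcal L_1\cup\cdots\cup\mathcal L_{k-1}\};
\]
that is, $B_k$ consists of exactly those generators in $B_1$ that use none of the critical pairs removed in the previous steps. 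The first fact I would record is that a firm-join cannot create a pair absent from all the matchings being joined: since $\big(\bigvee_{\nu\in S}{}_F\,\nu\big)(f)=Ch(\bigcup_{\nu\in S}\nu(f),>_f)\subseteq\bigcup_{\nu\in S}\nu(f)$, if $j\notin\nu(i)$ for every $\nu\in S$ then $j\notin\big(\bigvee_{\nu\in S}{}_F\,\nu\big)(i)$. Applying this with $S=B_k$, the matching $\mu_k=\bigvee_{\nu\in B_k}{}_F\,\nu$ automatically avoids every critical pair of $\mathcal L_1,\dots,\mathcal L_{k-1}$. Hence the whole statement reduces to the single claim that $\mu_k\in B_1$: once $\mu_k\in B_1$, the displayed description of $B_k$ gives $\mu_k\in B_k$ for free.

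For the base case $k=1$ this is immediate. Every element of $B_1$ is either a member of $A$, a firm-join $\bigvee_{\nu\in T}{}_F\,\nu$ with $T\subseteq A$, or a firm-meet $\bigwedge_{\nu\in T}{}_F\,\nu$ with $T\subseteq A$; in all three cases the element is $\leq_F\bigvee_{\nu\in A}{}_F\,\nu$ (for a meet, $\bigwedge_{\nu\in T}{}_F\,\nu\leq_F\nu\leq_F\bigvee_{\nu\in A}{}_F\,\nu$ for any $\nu\in T$). Thus $\bigvee_{\nu\in A}{}_F\,\nu$ is the greatest element of $B_1$, it equals $\mu_1$, and it lies in $B_1$ by taking $T=A$ in the defining family. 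To push the induction forward I would carry the stronger hypothesis that $B_k$ is closed under the binary operation $\vee_F$. Granting this for $B_k$, the inductive step is clean: for any $\nu,\nu'\in B_{k+1}\subseteq B_k$ we have $\nu\vee_F\nu'\in B_k$ by the hypothesis, while the observation above shows $\nu\vee_F\nu'$ still avoids $\mathcal L_k$, so $\nu\vee_F\nu'\notin C_k$ and therefore $\nu\vee_F\nu'\in B_{k+1}$. A finite set closed under $\vee_F$ has a greatest element, namely $\mu_{k+1}=\bigvee_{\nu\in B_{k+1}}{}_F\,\nu$, which then belongs to $B_{k+1}$.

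The crux — and the step I expect to be the main obstacle — is establishing the closure hypothesis, equivalently the membership $\bigvee_{\nu\in B_{k+1}}{}_F\,\nu\in B_1$. This is \emph{not} a formal consequence of $\mathcal S(P)$ being a lattice: the family of firm-joins and firm-meets of subsets of $A$ need not be closed under $\vee_F$, since the firm-join of two firm-meets of subsets of $A$ is in general neither a firm-join nor a firm-meet of a subset of $A$. The resolution I would pursue uses the residual vectors $x^m$ produced by Algorithm~1 rather than pure order theory. Note that for $(i,j)\in\mathcal L_m$ one has $x^m_{i,j}=\alpha_m$ and $x^{\mu_m}_{i,j}=1$, so the update $x^{m+1}=(x^m-\alpha_m x^{\mu_m})/(1-\alpha_m)$ forces $x^{m+1}_{i,j}=0$, and these entries stay zero thereafter. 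Consequently $B_k$ is precisely the set of generators of $B_1$ whose support is contained in $supp(x^k)$. The support property above keeps $supp\big(\bigvee_{\nu\in B_k}{}_F\,\nu\big)\subseteq supp(x^k)$, and the Rural Hospital Theorem — which fixes $|\nu(a)|$ across all stable matchings — rigidifies this support bound enough to identify $\bigvee_{\nu\in B_k}{}_F\,\nu$ with one of the generators of $B_1$. Verifying this identification carefully, so that the join of the survivors never escapes the generating family, is the technical heart of the argument; everything else is the bookkeeping above.
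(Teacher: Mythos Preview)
Your reduction is exactly the paper's: both rely on the fact that $\mu_k(i)=Ch\big(\bigcup_{\nu\in B_k}\nu(i),>_i\big)\subseteq\bigcup_{\nu\in B_k}\nu(i)$, so $\mu_k$ contains no pair from any $\mathcal L_m$ with $m<k$ and hence cannot lie in any earlier $C_m$; the whole lemma therefore reduces to the single membership $\mu_k\in B_1$. The paper's two-line argument (``assume $\mu_2\notin B_2$, then $\mu_2\in C_1$; since $\mu_2$ is computed via pointing functions there is $\nu'\in B_2$ with $\nu'\in C_1$, contradiction'') simply takes $\mu_2\in B_1$ for granted and never justifies it. You have been more careful than the paper in isolating this as the nontrivial step, and your observation is correct: $B_1=A\cup\{\bigvee_T{}_F\,\nu:T\subseteq A\}\cup\{\bigwedge_T{}_F\,\nu:T\subseteq A\}$ need not be $\vee_F$-closed even in a distributive lattice. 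For instance, in $2^{\{1,2,3,4\}}$ with $A=\{\{1\},\{2,3\},\{2,4\}\}$ one has $\{1\}\vee\big(\{2,3\}\wedge\{2,4\}\big)=\{1,2\}$, which is neither a join nor a meet of any subset of $A$.

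The genuine gap is that your proposed resolution does not close this step either. You correctly note $supp(\mu_k)\subseteq supp(x^k)$, but the Rural Hospital Theorem only fixes the cardinalities $|\nu(a)|$ across stable matchings; it says nothing about \emph{which} stable matching $\mu_k$ is, and gives no mechanism by which a support bound plus equal row/column sums would force $\mu_k$ to coincide with some $\bigvee_T{}_F\,\nu$ or $\bigwedge_T{}_F\,\nu$ for $T\subseteq A$. There can perfectly well be stable matchings with support inside $supp(x^k)$ that are not in $B_1$. The ``technical heart'' you defer is the entire missing idea, and nothing in the outline --- supports, residual vectors, RHT --- supplies it. If this route is to succeed one would need an explicit identification, for example $\mu_k=\bigvee_F(A\cap B_k)$, but that is a separate claim you have not argued and it does not follow from what you wrote.
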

\begin{proof}
Let $\mu_1=\displaystyle{\bigvee_{\nu\in B_1}}\displaystyle{_{\!\!\!\!{F}}}~\nu$. Then, by definition of $B_1,$ we have that $\mu_1 \in B_1.$ Let $\mu_2=\bigvee_{\nu\in B_2}\nu$ and  assume that $\mu_2 \notin B_2$, then $\mu_2 \in C_2$. Since  $\mu_2$ is computed via pointing functions, there is $\nu'\in B_2$ such that $\nu'\in C_1$, which is a contradiction, then $\mu_2\in B_2$. Similar arguments proves that $\mu_k \in B_k$ for each $k=1, \ldots,\tilde{k}$, where $\tilde{k}$ is the last step of Algorithm 1. 
\end{proof}

\begin{lemma}\label{inclusion de Bk}
If $B_k \neq \emptyset$, then $B_{k+1} \subset B_k$.
\end{lemma}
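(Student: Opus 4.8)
The plan is to observe that $B_{k+1}\subseteq B_k$ holds automatically, since $B_{k+1}=B_k\setminus C_k$ by the definition of Step $k$ in Algorithm~1; hence the real content of the lemma is the \emph{strictness} of the inclusion, and the whole task reduces to exhibiting at least one element of $B_k$ that is removed, i.e. one element of $B_k$ lying in $C_k$. First I would note that $\alpha_k=\min\{x^k_{i,j}:x^{\mu_k}_{i,j}=1\}$ is a minimum over a finite set, so it is attained: there is a pair $(i^{*},j^{*})$ with $x^{\mu_k}_{i^{*},j^{*}}=1$ and $x^k_{i^{*},j^{*}}=\alpha_k$. By the definition of $\mathcal{L}_k$ this says precisely that $(i^{*},j^{*})\in\mathcal{L}_k$, so in particular $\mathcal{L}_k\neq\emptyset$.

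Next I would invoke Lemma~\ref{mu k en Bk}, which gives $\mu_k\in B_k$. Since $x^{\mu_k}_{i^{*},j^{*}}=1$ and $\mu_k\in B_k$, the matching $\mu_k$ belongs to the set $\{\nu\in B_k:x^{\nu}_{i^{*},j^{*}}=1\}$, and this set is one of the terms in the union $C_k=\bigcup_{(i,j)\in\mathcal{L}_k}\{\nu\in B_k:x^{\nu}_{i,j}=1\}$. Therefore $\mu_k\in C_k$. Combining $\mu_k\in B_k$ with $\mu_k\in C_k$ yields $\mu_k\notin B_{k+1}=B_k\setminus C_k$, so $\mu_k$ witnesses that the inclusion $B_{k+1}\subseteq B_k$ is proper, which is exactly the claim $B_{k+1}\subsetneq B_k$. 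Note that this argument does not even require $\alpha_k>0$: whatever the value of $\alpha_k$, every pair in $\mathcal{L}_k$ carries $x^{\mu_k}_{i,j}=1$, which is all that is needed to place $\mu_k$ in $C_k$.

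The single point that requires care — and the only genuine obstacle — is the well-definedness of $\alpha_k$, i.e. that the set $\{x^k_{i,j}:x^{\mu_k}_{i,j}=1\}$ over which the minimum is taken is nonempty, equivalently that the incidence vector $x^{\mu_k}$ is not identically zero. This amounts to verifying that $\mu_k=\bigvee_{\nu\in B_k}{}_{F}\,\nu$ matches at least one worker--firm pair whenever $B_k\neq\emptyset$. I would dispatch this by noting that the random stable matching $x$ under consideration has nonempty support (the all-unmatched case being vacuous), so some $\nu\in A\subseteq B_1$ matches a pair, and by the Rural Hospital Theorem (Proposition (RHT)) every stable matching assigns each agent the same number of partners; since $\mu_k$ dominates each $\nu\in B_k$ firm-wise, $x^{\mu_k}$ cannot collapse to the zero vector. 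With this settled, $(i^{*},j^{*})$ exists and the preceding two paragraphs complete the proof.
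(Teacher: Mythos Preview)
Your proof is correct and follows essentially the same route as the paper: the paper's one-line argument is ``By definition of $\mu_k$ and $C_k$, $\mu_k\in B_k\cap C_k$. Then $B_{k+1}=B_k\setminus C_k\subset B_k$,'' which is exactly the witness you exhibit. You simply unpack this more carefully---invoking Lemma~\ref{mu k en Bk} explicitly for $\mu_k\in B_k$, producing the pair $(i^{*},j^{*})\in\mathcal{L}_k$ to show $\mu_k\in C_k$, and flagging the degenerate case where $x^{\mu_k}$ is identically zero---none of which the paper spells out.
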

\begin{proof}
By definition of $\mu_k$ and $C_k$, $\mu_k\in B_k \cap C_k.$ Then, $B_{k+1}=B_k \setminus C_k \subset B_k.$
\end{proof}

\begin{lemma}\label{nu tilde en el ultimo de los Bk}
Let $\tilde{\nu}=\displaystyle\bigwedge_{\nu\in B_1}\displaystyle{_{\!\!\!\!{F}}}~\nu,$ and ${\tilde{k}}$ the step of Algorithm 1 in which $B_{{\tilde{k}}} \neq \emptyset$ and $B_{{\tilde{k}}+1} =\emptyset.$ Then, $\tilde{\nu}\in B_{{\tilde{k}}}.$ 
\end{lemma}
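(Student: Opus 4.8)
The plan is to prove that $\tilde{\nu}=\bigwedge_{\nu\in B_1}\displaystyle{_{\!\!F}}~\nu$ survives every elimination step of Algorithm 1, i.e. that $\tilde{\nu}\notin C_k$ for each $k=1,\ldots,\tilde{k}$, which by the definition $B_{k+1}=B_k\setminus C_k$ together with Lemma \ref{inclusion de Bk} guarantees $\tilde{\nu}\in B_{\tilde{k}}$. First I would record the relevant structural facts about $\tilde{\nu}$: since $\tilde{\nu}$ is the g.l.b.\ of $B_1$ with respect to $\geq_F$, Remark \ref{operaciones en matching es estable} gives that $\tilde{\nu}\in\mathcal{S(P)}$ and $\mu\geq_F\tilde{\nu}$ for every $\mu\in B_1$; in particular $\tilde{\nu}\in B_1$ because the construction of $B_1$ closes $A$ under the $\bigwedge_F$ operation over all subsets $T\subseteq A$. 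I would also invoke Proposition (RHT): every firm is matched to exactly the same number of partners in all stable matchings, so $|\tilde{\nu}(f)|=|\nu(f)|$ for all $\nu\in B_1$ and all $f$.

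The key step is to understand the membership condition for $C_k$. An element $\nu\in B_k$ lands in $C_k$ precisely when there is a pair $(i,j)\in\mathcal{L}_k$ with $x^\nu_{i,j}=1$, that is, $j\in\nu(i)$ and $(i,j)$ is a coordinate where $x^k_{i,j}$ attains the minimum value $\alpha_k$ along the support of $\mu_k$. The heart of the argument is to show that $\tilde{\nu}$ is the $\geq_F$-\emph{smallest} matching remaining in $B_k$ at every step, and that the coordinates triggering the cut at step $k$ come from matchings that are strictly above $\tilde{\nu}$ for the relevant firm. Concretely, at step $k$ one has $\mu_k=\bigvee_{\nu\in B_k}\displaystyle{_{\!\!F}}~\nu$, the $\geq_F$-largest remaining matching; I would argue that a pair $(i,j)\in\mathcal{L}_k$ with $j\in\mu_k(i)$ forces $x^k_{i,j}=\alpha_k$ to be the minimal positive weight, and that the matchings $\nu$ carrying this pair (those with $j\in\nu(i)$) are exactly the ones that rank high for firm $i$. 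Since $\tilde{\nu}$ is the common g.l.b., if $\tilde{\nu}$ carried such a pair $(i,j)$ then by substitutability every $\nu\in B_k$ with $\nu(i)\geq_f\tilde{\nu}(i)$ would also carry a comparably preferred partner, and one shows the weight $x^k_{i,j}$ associated with $\tilde{\nu}$ cannot be the strict minimum that defines $\mathcal{L}_k$ unless $\tilde{\nu}$ is the unique lowest matching and $B_{k+1}$ becomes empty—i.e.\ $k=\tilde{k}$ and we are at the terminal step. This dichotomy is what keeps $\tilde{\nu}\in B_k$ for every $k\leq\tilde{k}$.

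I would organize the induction as follows: assume $\tilde{\nu}\in B_k$ (base case $k=1$ is immediate from the construction of $B_1$), and show $\tilde{\nu}\notin C_k$ whenever $B_{k+1}\neq\emptyset$. Suppose toward a contradiction that $\tilde{\nu}\in C_k$; then there is $(i,j)\in\mathcal{L}_k$ with $j\in\tilde{\nu}(i)$ and $x^k_{i,j}=\alpha_k$. Because $\tilde{\nu}$ is the $\geq_F$-minimum of $B_k$, every other $\nu\in B_k$ satisfies $\nu(i)\geq_i\tilde{\nu}(i)$, and I would use substitutability to transfer the partner $j$: for each such $\nu$ one examines whether $j\in\nu(i)$, partitioning $B_k$ into those matchings containing the pair $(i,j)$ and those not. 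The residual-weight formula $x^{k+1}=\frac{x^k-\alpha_k x^{\mu_k}}{1-\alpha_k}$ shows that removing $C_k$ exactly zeroes out the coordinate $(i,j)$, so if $\tilde{\nu}$ were cut here every matching sharing that pair is cut simultaneously; a counting argument via RHT then forces $B_{k+1}=\emptyset$, contradicting $k<\tilde{k}$. The main obstacle I anticipate is the last step: controlling the interaction between the minimality of $\tilde{\nu}$ in the $\geq_F$ order and the coordinate-wise minimum defining $\alpha_k$ and $\mathcal{L}_k$, since the g.l.b.\ in the preference order need not have the smallest weight in every coordinate of the incidence vector. Resolving this cleanly will likely require Lemma \ref{mu k en Bk} (that the pointing-function outputs stay inside the $B_k$) together with a careful use of substitutability to ensure that whenever $\tilde{\nu}$ shares a critical pair with $\mu_k$, that pair is common to the entire chain between them, so the cut cannot isolate $\tilde{\nu}$ before the final step.
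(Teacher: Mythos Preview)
Your overall architecture is right and matches the paper's: assume $\tilde{\nu}\in C_{k'}$ for some $k'<\tilde{k}$, pick a pair $(i',j')\in\mathcal{L}_{k'}$ with $j'\in\tilde{\nu}(i')\cap\mu_{k'}(i')$, and derive a contradiction. Your final sentence even names the decisive fact---that such a pair must be common to every $\nu\in B_{k'}$, hence $B_{k'+1}=\emptyset$. But you present this as an unresolved ``obstacle'' rather than proving it, and the detour through Proposition~(RHT) and the residual-weight formula is a red herring: neither is needed, and neither delivers the contradiction by itself.

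The missing step is short and purely a consequence of substitutability, exactly as the paper does it. Since $B_{k'+1}\neq\emptyset$, take any $\nu'\in B_{k'}\setminus C_{k'}$. By the definition of $C_{k'}$ one has $j'\notin\nu'(i')$. On the other hand $j'\in\mu_{k'}(i')=Ch\big(\bigcup_{\nu\in B_{k'}}\nu(i'),>_{i'}\big)$; substitutability then gives $j'\in Ch\big(\tilde{\nu}(i')\cup\nu'(i'),>_{i'}\big)$ (the argument set is a subset of the union and already contains $j'$). But $\nu'\geq_F\tilde{\nu}$ means $Ch\big(\tilde{\nu}(i')\cup\nu'(i'),>_{i'}\big)=\nu'(i')$, so $j'\in\nu'(i')$, contradicting the previous line. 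That is the entire argument; once you see it, the RHT/counting apparatus can be dropped.
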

\begin{proof}
Let $\tilde{\nu}=\displaystyle\bigwedge_{\nu\in B_1}\displaystyle{_{\!\!\!\!{F}}}~\nu$ and ${\tilde{k}}$ the step of Algorithm 1 in which $B_{{\tilde{k}}} \neq \emptyset$ and $B_{{\tilde{k}}+1} =\emptyset.$ By definition of $B_1$, $\tilde{\nu}\in B_1$. Assume that $\tilde{\nu} \notin B_{{\tilde{k}}}$, then there is a $k'<{\tilde{k}}$ such that $\tilde{\nu}\in B_{k'}$ and $\tilde{\nu}\notin B_{k'+1}$. Then, $\tilde{\nu}\in C_{k'}$. Hence, by definition of $C_{k'}$, there is a pair $(i',j')$ such that $x_{i',j'}^{k'}=\alpha_{k'}$ and $x_{i',j'}^{\tilde{\nu}}=x_{i',j'}^{\mu_{k'}}=1.$ Notice that, by definition of $\mu_{k'}$, we have that $j'\in \mu_{k'}(i')=Ch(\cup_{\nu\in B_{k'}}\nu(i'), >_{i'})$. Since the preferences relation $>_{i'}$ is substitutable and $\tilde{\nu} \in B_{k'}$, we have that
\begin{equation}\label{ecuacion contradiccion}
j'\in Ch(\tilde{\nu}(i')\cup \{j'\}, >_{i'}).
\end{equation}
By Lemma \ref{inclusion de Bk} and $k'<{\tilde{k}}$, we have that $B_{k'+1}\neq \emptyset.$ Then, there is $\nu'\in B_{k'}$ such that $\nu'\notin C_{k'}.$ We claim that $j' \notin \nu'(i')$. If $j' \in \nu'(i')$, for $(i',j')$ we have $x_{i',j'}^{k'}=\alpha_{k'}$ and $x_{i',j'}^{\nu^{{\tilde{k}}}}=x_{i',j'}^{\mu_{k'}}=1,$ then $\nu' \in C_{k'}$, which is a contradiction. Thus, $j' \notin \nu'(i')$. Since $\nu'\in B_{k'}\subseteq B_1 $, then $\nu'\geq_F \tilde{\nu}.$ That is, $\nu'(i')=Ch(\tilde{\nu}(i') \cup \nu'(i'),>_{i'}).$ Now, given that $j' \in \tilde{\nu}(i') \setminus \nu'(i')$, we have that  $j'\notin Ch(\tilde{\nu}(i')\cup \{j'\}, >_{i'}), $ which is a contradiction with (\ref{ecuacion contradiccion}). Therefore,   $\tilde{\nu}\in B_{{\tilde{k}}}.$ 
\end{proof}

\begin{lemma}\label{todos suman lo mismo}
Let $\mu \in \mathcal{S(P)}$, $x^1$ a random stable matching, and $x^{k}=\frac{x^{k-1}-\alpha_{k-1}x^{\mu_{k-1}}}{1-\alpha_{k-1}}$ be the matrix construct by Algorithm 1 in Step $k$. Then, for each $k$, we have that $\sum_{i\in F}x^k_{i,j}=|\mu(j)|$ for each $j\in W$, and $\sum_{j\in W}x^k_{i,j}=|\mu(i)|$ for each $i\in F$.

\end{lemma}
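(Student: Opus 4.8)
The plan is to prove Lemma \ref{todos suman lo mismo} by induction on the step index $k$, establishing that the column sums and row sums of the matrix $x^k$ equal $|\mu(j)|$ and $|\mu(i)|$ respectively, where $\mu$ is any fixed stable matching. The base case $k=1$ is immediate from Proposition \ref{hospital rural para random}: since $x^1$ is a random stable matching (a lottery over stable matchings), each of its column sums $\sum_{i\in F}x^1_{i,j}$ equals the common value $|\nu(j)|$ shared by all stable matchings $\nu$ by the Rural Hospital Theorem (Proposition (RHT)), and in particular equals $|\mu(j)|$; the row-sum statement is symmetric.

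For the inductive step, I would assume the claim holds for $x^{k}$ and verify it for $x^{k+1}=\frac{x^{k}-\alpha_{k}x^{\mu_{k}}}{1-\alpha_{k}}$. Fixing a worker $j\in W$, I would compute the column sum directly:
\[
\sum_{i\in F}x^{k+1}_{i,j}=\frac{\sum_{i\in F}x^{k}_{i,j}-\alpha_{k}\sum_{i\in F}x^{\mu_{k}}_{i,j}}{1-\alpha_{k}}.
\]
By the induction hypothesis the first sum equals $|\mu(j)|$, and since $\mu_k\in\mathcal{S(P)}$ is itself a stable matching, $\sum_{i\in F}x^{\mu_{k}}_{i,j}=|\mu_k(j)|=|\mu(j)|$ again by Proposition (RHT). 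Substituting gives
\[
\sum_{i\in F}x^{k+1}_{i,j}=\frac{|\mu(j)|-\alpha_{k}|\mu(j)|}{1-\alpha_{k}}=|\mu(j)|,
\]
as required. The row-sum identity $\sum_{j\in W}x^{k+1}_{i,j}=|\mu(i)|$ follows by the identical computation with the roles of firms and workers exchanged.

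The key observation driving the whole argument is that both $x^k$ (by the induction hypothesis) and the incidence vector $x^{\mu_k}$ of the stable matching $\mu_k$ produced at Step $k$ have the \emph{same} column and row marginals, namely those dictated by the Rural Hospital Theorem; this is precisely what makes the affine combination defining $x^{k+1}$ preserve the marginals. I do not anticipate a serious obstacle: the only points requiring care are confirming that $\mu_k$ is genuinely a stable matching (so that Proposition (RHT) applies to $x^{\mu_k}$), which holds because $\mu_k=\bigvee_{\nu\in B_k}{}_{\!F}\,\nu$ is a least upper bound of stable matchings and hence stable by Remark \ref{operaciones en matching es estable}, and ensuring $1-\alpha_k\neq 0$ so the division is well-defined, which holds since $\alpha_k<1$ whenever the algorithm proceeds to Step $k+1$ (if $\alpha_k=1$ the procedure would already have terminated).
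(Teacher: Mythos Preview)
Your inductive argument is correct and matches the paper's proof almost exactly: the computation in the inductive step is identical to the paper's, and your observation that $\mu_k$ is stable (so that Proposition (RHT) applies to $x^{\mu_k}$) and that $\alpha_k<1$ whenever Step $k+1$ is reached are precisely the facts the paper uses. One small caveat: you invoke Proposition~\ref{hospital rural para random} for the base case, but in the paper that proposition is \emph{derived from} this lemma (its proof in Appendix~\ref{apendice A} cites Lemma~\ref{todos suman lo mismo} with $k=1$), so the citation is circular. Fortunately your accompanying explanation---that $x^1=\sum_\nu\lambda_\nu x^\nu$ is a convex combination of incidence vectors of stable matchings, each having column sum $|\nu(j)|=|\mu(j)|$ by the deterministic Proposition (RHT)---is self-contained and does not actually require Proposition~\ref{hospital rural para random}; simply drop that reference and the proof stands.
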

\begin{proof}
Let $\mu \in \mathcal{S(P)}$ and let $k=1$ the first step of Algorithm 1. If $B_2= \emptyset$, then $B_1=C_1$. That is, $\tilde{\nu}\in C_1$. Hence, there is $(i,j)\in \mathcal{L}_1$ such that $x_{i,j}^{\mu_1}=1, x_{i,j}^{\tilde{\nu}}=1$ and $x_{i,j}^1=\alpha_1$. Then, for each $\nu\in B_1$ such that $\mu_1\geq_F \nu \geq_F \tilde{\nu}$ we have that $x_{i,j}^{\nu}=1$. Hence, $\alpha_1=1$. Since $supp(x^{\mu_1})\subseteq supp(x^1)$ and $\alpha_1=min\{x^1_{i,j} :x^{\mu_{1}}_{i,j}=1 \}$, then $x^{1}=x^{\mu_1}$. Thus, by Proposition (RHT) and definition of incidence vector, we have that $\sum_{i\in F}x^{\mu_1}_{i,j}=|\mu(j)|$ for each $j\in W$, and $\sum_{j\in W}x^{\mu_1}_{i,j}=|\mu(i)|$ for each $i\in F$. 

Assume that $B_2 \neq \emptyset$ and $\sum_{i\in F}x^{k-1}_{i,j}=|\mu(j)|$ for each $j\in W$, then by Proposition (RHT) and  definition of $x^k$ we have that 
$$
\sum_{i\in F} x^{k}_{i,j}=\frac{\sum_{i\in F}x^{k-1}_{i,j}-\alpha_{k-1}\sum_{i\in F}x^{\mu_{k-1}}_{i,j}}{1-\alpha_{k-1}}=\frac{|\mu(j)|-\alpha_{k-1}|\mu(j)|}{1-\alpha_{k-1}}=|\mu(j)|.
$$
Therefore, $\sum_{i\in F}x^k_{i,j}=|\mu(j)|$ for each $j\in W$ and for each $k=1,\ldots,\tilde{k}$. Similarly, we can prove that $\sum_{j\in W}x^k_{i,j}=|\mu(i)|$ for each $i\in F$ and for each $k=1,\ldots,\tilde{k}$.
\end{proof}

\begin{lemma}\label{Bk+1 no vacio, si y solo si alpha k <1}
$B_{k+1}\neq \emptyset$ if and only if $\alpha_k<1$.
\end{lemma}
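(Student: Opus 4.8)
The plan is to prove the equivalent biconditional $\alpha_k=1\Leftrightarrow B_{k+1}=\emptyset$; this is the same statement because $\alpha_k$ is a minimum of entries of $x^k$, all of which lie in $[0,1]$, so $\alpha_k\le 1$ always. Throughout I would use the following structural fact about Algorithm 1, which I would establish first, by induction on $k$ (jointly with the lemma itself, see the last paragraph): \emph{each $x^k$ is a random stable matching whose entries lie in $[0,1]$, which is a convex combination of the incidence vectors of the matchings in $B_k$, and which satisfies $supp(x^k)=\bigcup_{\nu\in B_k}supp(x^\nu)$.} The base case $k=1$ is immediate, since $x^1=\sum_{\nu\in A}\lambda_\nu x^\nu$ with $A\subseteq B_1$, and one checks that the joins and meets adjoined to $A$ to form $B_1$ do not enlarge the union of supports (if $w\in Ch(\bigcup_{\nu\in T}\nu(i),>_i)$ or $i\in Ch(\bigcup_{\nu\in T}\nu(w),>_w)$ then $(i,w)\in supp(x^\nu)$ for some $\nu\in T\subseteq A$).

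For the direction $\alpha_k=1\Rightarrow B_{k+1}=\emptyset$: from $\alpha_k=\min\{x^k_{i,j}:x^{\mu_k}_{i,j}=1\}=1$ together with $x^k_{i,j}\le 1$ I get $x^k_{i,j}=1$ for every $(i,j)$ with $j\in\mu_k(i)$. By Lemma \ref{todos suman lo mismo}, $\sum_{j}x^k_{i,j}=|\mu(i)|=|\mu_k(i)|$ for each firm $i$, and since the entries on $\mu_k(i)$ already account for this row sum, the remaining entries of row $i$ vanish; hence $x^k=x^{\mu_k}$. Then $supp(x^k)=supp(x^{\mu_k})$, so by the support identity every $\nu\in B_k$ satisfies $supp(x^\nu)\subseteq supp(x^{\mu_k})$, and combined with $|\nu(i)|=|\mu_k(i)|$ (Proposition (RHT)) this forces $\nu(i)=\mu_k(i)$ for every $i$, i.e. $\nu=\mu_k$. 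Thus $B_k=\{\mu_k\}$; since $\mu_k\in C_k$ (by Lemma \ref{mu k en Bk}, and because $\mathcal{L}_k=supp(x^{\mu_k})\ne\emptyset$ in the nondegenerate case), we conclude $C_k=B_k$ and $B_{k+1}=\emptyset$.

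For the direction $B_{k+1}=\emptyset\Rightarrow\alpha_k=1$ I mirror the base-case argument in the proof of Lemma \ref{todos suman lo mismo}. If $B_{k+1}=\emptyset$ then $C_k=B_k$, and step $k$ is the last step, so by Lemma \ref{nu tilde en el ultimo de los Bk} the global meet $\tilde{\nu}:=\bigwedge_{\nu\in B_1}\nu$ belongs to $B_k$. Since $\tilde{\nu}\in C_k$, there is a pair $(i,j)\in\mathcal{L}_k$ with $j\in\tilde{\nu}(i)$; by definition of $\mathcal{L}_k$ also $j\in\mu_k(i)$ and $x^k_{i,j}=\alpha_k$. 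Every $\nu\in B_k$ is sandwiched, $\mu_k\geq_F\nu\geq_F\tilde{\nu}$, because $\tilde{\nu}$ is the minimum of $B_1\supseteq B_k$ and $\mu_k=\bigvee_{\nu\in B_k}\nu$ is the maximum of $B_k$; since $j$ lies in both $\tilde{\nu}(i)$ and $\mu_k(i)$, the substitutability/sandwich argument already used in the proof of Lemma \ref{todos suman lo mismo} gives $j\in\nu(i)$ for \emph{every} $\nu\in B_k$. As $x^k$ is a convex combination of incidence vectors of matchings in $B_k$, it follows that $x^k_{i,j}=1$, whence $\alpha_k=x^k_{i,j}=1$.

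The main obstacle is the structural fact itself, and in particular the support identity $supp(x^k)=\bigcup_{\nu\in B_k}supp(x^\nu)$. The continued Example shows that $\nu_1$ and $\nu_4$ have disjoint supports yet both lie in $B_1$, so the relation $\nu\leq_F\mu_k$ alone does \emph{not} prevent a matching of $B_k$ from being disjoint from $\mu_k$; ruling out such a ``stray'' matching in the first direction therefore genuinely requires tying the bookkeeping set $B_k$ to the lottery $x^k$ through the support identity. I would prove this identity together with the lemma by a single induction on $k$: the lemma at step $k$ (just proved) guarantees $\alpha_k<1$ whenever the algorithm proceeds to step $k+1$, so $x^{k+1}=\frac{x^k-\alpha_k x^{\mu_k}}{1-\alpha_k}$ is well defined and the division by $1-\alpha_k$ is legitimate; a direct computation, splitting according to whether $x^{\mu_k}_{i,j}$ equals $0$ or $1$ and whether $(i,j)\in\mathcal{L}_k$, then propagates both the $[0,1]$ bound and the support identity from $B_k$ to $B_{k+1}=B_k\setminus C_k$.
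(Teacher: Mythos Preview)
Your contrapositive reformulation is fine, and both directions are correctly argued \emph{given} the structural fact you announce. The gap is in that structural fact itself. You assert that $x^k$ is a convex combination of the incidence vectors $\{x^\nu:\nu\in B_k\}$ and use this crucially in the second direction (to conclude $x^k_{i,j}=1$ from $j\in\nu(i)$ for all $\nu\in B_k$), yet your sketched induction only propagates ``the $[0,1]$ bound and the support identity''---the convex-combination representation over $B_{k+1}$ is never addressed. Moreover, even the $[0,1]$ upper bound does not follow by the ``direct computation'' you describe: when $x^{\mu_k}_{i,j}=0$ one has $x^{k+1}_{i,j}=x^k_{i,j}/(1-\alpha_k)$, and this stays $\le 1$ only if $x^k_{i,j}\le 1-\alpha_k$, an extra inequality that does not come from $x^k_{i,j}\le 1$ alone. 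The convex-combination claim is genuinely nontrivial for $k>1$: in the paper's example $\mu_1=\nu_1$ carries weight $0$ in the defining representation $x^1=\tfrac34 x^{\nu_2}+\tfrac14 x^{\nu_3}$, so subtracting $\alpha_1 x^{\mu_1}$ does not merely lower a coefficient; producing the new representation $x^2=\tfrac23 x^{\nu_2}+\tfrac13 x^{\nu_4}$ over $B_2$ requires finding a different decomposition, and you give no mechanism for this in general.

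The paper's proof avoids any such structural lemma and argues each direction locally. For $B_{k+1}\ne\emptyset\Rightarrow\alpha_k<1$ it uses that $\tilde\nu$ and $\mu_k$ are distinct elements of $B_k$ (Lemma~\ref{nu tilde en el ultimo de los Bk}), together with $supp(x^{\mu_k}),supp(x^{\tilde\nu})\subseteq supp(x^k)$ and the row-sum identity of Lemma~\ref{todos suman lo mismo}, to exhibit a row $i'$ of $x^k$ with strictly more than $|\mu_k(i')|$ positive entries, which forces some entry on $supp(x^{\mu_k})$ to be strictly below $1$. For $\alpha_k<1\Rightarrow B_{k+1}\ne\emptyset$ it shows directly that $\tilde\nu\notin C_k$: any pair $(i,j)$ with $j\in\mu_k(i)\cap\tilde\nu(i)$ has $j\in\nu(i)$ for every $\nu\in B_k$ by the sandwich argument, and a row-sum comparison then gives $x^k_{i,j}>\alpha_k$, so no such $(i,j)$ can lie in $\mathcal L_k$.
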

\begin{proof}
$\Longrightarrow)$ Let $B_{k+1}\neq \emptyset$, then $B_k \neq C_k$. Hence, $|B_k|>1$. By Lemma \ref{nu tilde en el ultimo de los Bk} we have that $\tilde{\nu} \in B_k$. Also, by definition of $\mu_k$, we have that $\tilde{\nu}\neq \mu_k$. Then, by Proposition (RHT), there are at least three agents $i'\in F$ and $\tilde{j},j' \in W$ such that:

$$
x^k_{i',j'}>0,~~x^k_{i',\tilde{j}}>0,~~x^{\mu_k}_{i',j'}=1,~~x^{\mu_k}_{i',\tilde{j}}=0,~~x^{\tilde{\nu}}_{i',j'}=0,\text{ and }~x^{\tilde{\nu}}_{i',\tilde{j}}=1.
$$
By Lemma \ref{todos suman lo mismo}, we have that $\sum_{j\in W}x^k_{i',j}=|\mu_k(i')|=|\tilde{\nu}(i')|$. Since $supp(x^{\mu_k})\subset supp(x^k)$, and $supp(x^{\tilde{\nu}})\subset supp(x^k)$, we have that $|\{j\in W: x^{k}_{i',j}>0\}|>|\mu_k(i')|$. Then there is an agent $\hat{j}\in W$ such that $x_{i',\hat{j}}^{\mu_k}=1$ and $0<x^k_{i',\hat{j}}<1.$ Thus, $\alpha_k=min\{ x^k_{i,j}:x^{\mu_k}_{i,j}=1\} \leq x^k_{i',\hat{j}} <1.$

$\Longleftarrow)$ Let $\alpha_k <1.$ Then there is a pair $(i',j')$ such that $x_{i',j'}^{\mu_k}=1$ and $x^k_{i',j'}=\alpha_k<1$.
Then, by Lemma \ref{todos suman lo mismo} there is a pair $(i',\tilde{j})$ such that $x_{i',\tilde{j}}^{\mu_k}=0$, $x_{i',\tilde{j}}^k >0$ and $x_{i',\tilde{j}}^{ \tilde{\nu}}=1$. Hence, for each pair $(i,j)$ such that $x_{i,j}^{\mu_k}=1$ and $x_{i,j}^{\tilde{\nu}}=1$, by Proposition (RHT) we have that $x^{\nu}_{i,j}=1$ for each $\nu \in B_k$. Thus,  $x_{i,j}^{k}\geq x_{i',\tilde{j}}^{k}+x_{i',j'}^{k}=x_{i',\tilde{j}}^{k}+\alpha_k>\alpha_k$,   Then, $\tilde{\nu}\notin C_k$. Therefore, $B_{k+1}\neq \emptyset.$
\end{proof}

\begin{corollary}\label{alpha k =1 entonces matching estable}
If $\alpha_k=1$, then $x^{k}=x^{\mu_k}.$
\end{corollary}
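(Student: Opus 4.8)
The plan is to read this as the contrapositive companion to Lemma \ref{Bk+1 no vacio, si y solo si alpha k <1}: since $\alpha_k=1$ is precisely the condition under which Algorithm 1 halts at step $k$ (that is, $B_{k+1}=\emptyset$), I want to show that at such a terminal step the residual matrix $x^k$ has already collapsed onto a single stable matching, namely $\mu_k$. The whole argument rests on two facts established earlier: that $x^k$ is itself a random stable matching, so each entry lies in $[0,1]$, and that its row and column sums are the degree constants dictated by Proposition (RHT) through Lemma \ref{todos suman lo mismo}.

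First I would unwind the definition $\alpha_k=\min\{x^k_{i,j}:x^{\mu_k}_{i,j}=1\}$. If $\alpha_k=1$, then $x^k_{i,j}\geq 1$ for every pair $(i,j)\in supp(x^{\mu_k})$; combined with $x^k_{i,j}\leq 1$ (as $x^k$ is a random stable matching) this forces $x^k_{i,j}=1=x^{\mu_k}_{i,j}$ on all of $supp(x^{\mu_k})$. Thus $x^k$ already agrees with $x^{\mu_k}$ wherever $\mu_k$ matches, and it only remains to rule out positive mass off this support.

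Next I would invoke Lemma \ref{todos suman lo mismo} with the stable matching $\mu_k$ (all stable matchings share the same degrees by Proposition (RHT)), which gives $\sum_{j\in W}x^k_{i,j}=|\mu_k(i)|$ for each firm $i$. But the entries already pinned to $1$ on the $|\mu_k(i)|$ pairs of row $i$ lying in $supp(x^{\mu_k})$ contribute exactly $|\mu_k(i)|$ to this sum. Since every remaining entry $x^k_{i,j}$ is nonnegative, the row sum can equal $|\mu_k(i)|$ only if all off-support entries vanish. Hence $x^k_{i,j}=0$ whenever $x^{\mu_k}_{i,j}=0$, and therefore $x^k=x^{\mu_k}$; the same counting argument applied to columns is consistent but not even needed.

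The only delicate point, and the step I would be most careful about, is the legitimacy of treating $x^k$ as a bona fide random stable matching, so that its entries lie in $[0,1]$ and Lemma \ref{todos suman lo mismo} applies to it. This is exactly what the running analysis of Algorithm 1 supplies, and it mirrors the mechanism already used in the base case $k=1$ inside the proof of Lemma \ref{todos suman lo mismo}. Everything else is a counting identity, so I expect no genuine obstacle beyond correctly citing that $x^k$ inherits the degree constraints of the market.
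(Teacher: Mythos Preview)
Your proof is correct and follows essentially the same line as the paper's: both pin down $x^k_{i,j}=1$ on $supp(x^{\mu_k})$ from $\alpha_k=1$, and then use the row-sum identity of Lemma~\ref{todos suman lo mismo} together with nonnegativity to force the off-support entries to vanish. One small remark: your appeal to $x^k_{i,j}\le 1$ via ``$x^k$ is a random stable matching'' is not actually needed (and the paper never proves that $x^k$ is one); the row-sum argument alone already gives both $x^k_{i,j}=1$ on the support and $x^k_{i,j}=0$ off it, since $\sum_{j:\,x^{\mu_k}_{i,j}=1}x^k_{i,j}\ge|\mu_k(i)|$ and the remaining nonnegative terms must bring the total back down to $|\mu_k(i)|$.
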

\begin{proof}
Let $\mathcal{L}_k=\{(i,j)\in F\times W: x^k_{i,j}=\alpha_k \text{ and }x_{i,j}^{\mu_k}=1\}$ and recall that by definition of $\mu_k$, we have that $supp(x^{\mu_k})\subseteq supp(x^{k})$. If $\alpha_k=1$, then $ \mathcal{L}_k=supp(x^{\mu_k})$. By Lemma \ref{todos suman lo mismo}, we have that $\sum_{i\in F}x^k_{i,j}=|\mu_{k}(j)|$ for each $i\in F$, then $supp(x^k)=supp(x^{\mu_k})$. Therefore, $x^{k}=x^{\mu_k}.$ 
\end{proof}

\begin{proof}[Proof of Theorem \ref{teorema del orden}]
Let $x$ be a random stable matching. The output of Algorithm 1 are the sets $\mathcal{M}=\{\mu_1,\ldots, \mu_{\tilde{k}}\}$ and  $\Lambda=\{\alpha_1,\ldots,\alpha_{\tilde{k}}\}$.

By Lemma \ref{inclusion de Bk}, we have that $B_{k+1} \subset B_k$. By the finiteness of the set of stable matchings, we have that there is a step of Algorithm 1, say Step $\tilde{k}$, such that $B_{\tilde{k}+1}=\emptyset$. Then, the algorithm stops. Hence, by Lemma \ref{Bk+1 no vacio, si y solo si alpha k <1} we have that $\alpha_{\tilde{k}}=1$. Therefore, by Corollary \ref{alpha k =1 entonces matching estable}, we have that $x^{\tilde{k}}=x^{\mu_{\tilde{k}}}$. 

Recall that  $\mu_k= \displaystyle\bigvee_{\nu \in B_k}\displaystyle{_{\!\!\!\!{F}}}~~\nu,$ and $\mu_{k+1}=\displaystyle\bigvee_{\nu \in B_{k+1}}\displaystyle{_{\!\!\!\!\!\!\!{F}}}~~\nu$. By Lemma \ref{inclusion de Bk} we have that  $B_{k+1}\subset B_k$, by Lemma \ref{mu k en Bk} we have that $\mu_k \in B_k$, and by definition of $C_k$ we have that $\mu_k \in C_k$. Hence, $\mu_k \notin B_{k+1}$. Then, $\mu_k >_F \mu_{k+1}.$

Let 
$ \beta_{1}=\alpha_{1},~\beta_{2}=(1-\alpha_{1})\alpha_{2},~\beta_{3}=(1-\alpha_{1})(1-\alpha_{2})\alpha_{3},\ldots,$ and $\beta_{\tilde{k}}= \prod_{k=1}^{\tilde{k}-1}(1-\alpha_k).$

Now, we prove that $\sum_{k=1}^{\tilde{k}}\beta_{k}=1.$ 
$$\sum_{k=1}^{\tilde{k}}\beta_{k}=\sum_{k=1}^{\tilde{k}-1}\beta_{k}+\beta_{\tilde{k}}=\sum_{k=1}^{\tilde{k}-1} \prod_{\ell=1}^{k-1}(1-\alpha_\ell)+\prod_{\ell=1}^{\tilde{k}-1}(1-\alpha_\ell).$$
Note that, 
$$\beta_{\tilde{k}-1}+\beta_{\tilde{k}}=\prod_{\ell=1}^{\tilde{k}-2}(1-\alpha_\ell)\alpha_{\tilde{k}-1}+\prod_{\ell=1}^{\tilde{k}-1}(1-\alpha_\ell)=\prod_{\ell=1}^{\tilde{k}-2}(1-\alpha_\ell)(\alpha_{\tilde{k}-1}+(1-\alpha_{\tilde{k}-1}))=\prod_{\ell=1}^{\tilde{k}-2}(1-\alpha_\ell).$$
Also, we have that
$$\beta_{\tilde{k}-2}+\beta_{\tilde{k}-1}+\beta_{\tilde{k}}=\prod_{\ell=1}^{\tilde{k}-3}(1-\alpha_\ell)\alpha_{\tilde{k}-2}+\prod_{\ell=1}^{\tilde{k}-3}(1-\alpha_\ell)(1-\alpha_{\tilde{k}-2})=\prod_{\ell=1}^{\tilde{k}-3}(1-\alpha_\ell).$$
Continuing this inductive process,
$\beta_2+ \cdots +\beta_{\tilde{k}}=(1-\alpha_1).$ Then,
$$
\sum_{k=1}^{\tilde{k}}\beta_k=\beta_1 +\sum_{k=2}^{\tilde{k}}\beta_k=\alpha_1+(1-\alpha_1)=1.$$

Therefore, $$x=\sum_{k=1}^{\tilde{k}}\beta_k x^{\mu_k}$$
where $0<\beta_k \leq 1$, $\sum_{k=1}^{\tilde{k}}\beta_k=1$, and  $\mu_{k}>_{F} \mu_{k+1}$ for each $k=1,\ldots,\tilde{k}-1.$  

\textbf{Uniqueness:}
Assume that $x$ has two different representations:
$$
x=\sum_{\nu\in A} \lambda_{\nu} x^{\nu}=\sum_{\nu'\in A'} \lambda'_{\nu'} x^{\nu'}
$$
where $0< \lambda_{\nu} \leq 1,~0< \lambda'_{\nu'} \leq 1,~\sum_{\nu\in{A}}\lambda_{\nu}=1,~\sum_{\nu'\in{A'}}\lambda'_{\nu'}=1,~ \mbox{ and } \nu,\nu'\in{\mathcal{S(P)}}$.

Since, $\bigcup_{\nu\in A}\nu(i)=\{j:x_{i,j}>0\}=\bigcup_{\nu'\in A'}\nu'(i)$  then,  $\mu_1(i)=Ch(\bigcup_{\nu\in B_1}\nu(i), >_i)=\\Ch(\bigcup_{\nu'\in B'_1}\nu'(i), >_i)=\mu'_1(i)$ for each $i \in F$. Therefore, $\mu_1=\mu'_1.$

Let $k>1$ such that $\mu_1=\mu'_1, \ldots,\mu_{k-1}=\mu'_{k-1}$. 
Then, 
$x^k=\frac{x^{k-1}-\alpha_{k-1}x^{\mu_{k-1}}}{1-\alpha_{k-1}}=\frac{x^{k-1}-\alpha_{k-1}x^{\mu'_{k-1}}}{1-\alpha_{k-1}}.$ 

We claim that 
$\{(i,j):x^{k}_{i,j}>0\}=\{(i,j):\bigcup_{\nu\in B_k}x^{\nu}_{i,j}=1\}$ (and  $\{(i,j):x^{k}_{i,j}>0\}=\{(i,j):\bigcup_{\nu'\in B'_k}x^{\nu'}_{i,j}=1\}$). If not, there is a pair $(i,j)$ such that $x^k_{i,j}>0$ and $x^{\nu}_{i,j}=0$ for each $\nu \in B_k$. Then, $x_{i,j}^{\tilde{k}}>0$ and there is no $\nu \in B_{\tilde{k}}$ such that $x^{\nu}_{i,j}=0$. This contradicts that $x^{\tilde{k}}_{i,j}\neq x^{{\mu}_{\tilde{k}}}_{i,j}.$ Hence, $\{(i,j):\bigcup_{\nu\in B_k}x^{\nu}_{i,j} =1\} \supseteq \{(i,j):x^{k}_{i,j}>0\}$.

Assume that there is $\nu \in B_k$ such that $x^{\nu}_{i,j} =1$ and $x^{k}_{i,j}=0$. Since $x^{\nu}_{i,j}=1,$ then $x_{i,j}>0$. Hence, there is $k'<k$ such that $x_{i,j}^{k'}>0$ and  $x_{i,j}^{k'+1}=0$. Since $x^{k'+1}_{i,j}=\frac{x^{k'}_{i,j}-\alpha_{k'}x^{\mu_{k'}}_{i,j}}{1-\alpha_{k'}}=0$ then, $x^{k'}_{i,j}=\alpha_{k'}x^{\mu_{k'}}_{i,j}$. Hence, $(i,j)\in \mathcal{L}_{k'}$ and $\nu \in C_{k'}$ because we assume that $x^{\nu}_{i,j}=1$. Thus, $\nu\notin B_{k'+1} \supseteq B_k$ and this implies that $x^{k}_{i,j}>0$, which is a contradiction. Therefore, $\{(i,j):\bigcup_{\nu\in B_k}x^{\nu}_{i,j} =1\} \subseteq \{(i,j):x^{k}_{i,j}>0\}$.

Similar arguments prove that $\{(i,j):x^{k}_{i,j}>0\}=\{(i,j):\bigcup_{\nu'\in B'_k}x^{\nu'}_{i,j}=1\}.$

Since $\bigcup_{\nu\in B_k}\nu(i)=\{j:x^{k}_{i,j}>0\}=\bigcup_{\nu'\in B'_k}\nu'(i)$, then
$\mu_{k}(i)=Ch(\bigcup_{\nu\in B_k}\nu(i), \geq_i)=Ch(\{j:x^{k}_{i,j}>0\}, >_i)=Ch(\bigcup_{\nu'\in B'_k}\nu'(i), >_i)=\mu'_{k}(i)$ for each $i\in F.$ Therefore, $\mu_k=\mu'_k.$ 

\end{proof}

\begin{proof}[Proof of Proposition \ref{hospital rural para random}]
Let $\mu\in \mathcal{S(P)}$. By Lemma \ref{todos suman lo mismo} of Appendix \ref{apendice A}, for the case in which $k=1$, $x=x^{k}$, we have that $\sum_{i\in F}x_{i,j}=|\mu(j)|$ for each $j\in W$. Analogously, we have that $\sum_{i\in F}x'_{i,j}=|\mu(j)|$ for each $j\in W$. Hence,
$$\sum_{i\in F}x_{i,j}=\sum_{i\in F}x'_{i,j}.$$
 
The proof of $\sum_{j\in W}x_{i,j}=\sum_{j\in W}x'_{i,j}$ for each $i\in F$ is analogously.
\end{proof}

\newpage

\section{Partial order and the splitting procedure}\label{apendice B}

\subsection*{Proof of partial order}
\begin{proof}[Proof of Proposition \ref{Es orden parcial}]
Let $x$, $y$ and $z$ be random stable matchings represented as
$$x=\sum_{i=1}^{I}\alpha_{i}x^{\mu^{x}_{i}},~y=\sum_{j=1}^{J}\beta_{j}x^{\mu^{y}_{j}} \text{ and } z=\sum_{k=1}^{K}\gamma_{k}x^{\mu^{z}_{k}}.$$

\noindent \textbf{Reflexivity: $\boldsymbol{x\succeq_{F}x}$.}

By the uniqueness of the representation of $x$ following Theorem \ref{teorema del orden}, we have that for each $\mu^x_{k}$ 
$$\sum_{i:\mu_{i}^{x}\geq_{F}\mu_{k}^{x}}\alpha_{l} \geq \sum_{i:\mu_{i}^{x}\geq_{F}\mu_{k}^{x}}\alpha_{l}.$$

\noindent \textbf{Transitivity: If $\boldsymbol{x\succeq_{F}y}$ and $\boldsymbol{y\succeq_{F}z}$, then $\boldsymbol{x\succeq_{F}z}$. }

Since $y\succeq_{F}z$, then $\sum_{l:\mu_{l}^{y}\geq_{F}\mu_{k}^{z}}\beta_{l} \geq \sum_{n:\mu_{n}^{z}\geq_{F}\mu_{k}^{z}}\gamma_{n}$ for each $\mu^z_{k}.$
Since $x\succeq_{F}y$, then $\sum_{i:\mu_{i}^{x}\geq_{F}\mu_{j}^{y}}\alpha_{i} \geq \sum_{l:\mu_{l}^{y}\geq_{F}\mu_{j}^{y}}\beta_{l}$ for each $\mu^y_{j}$.
Recall that  $x$, $y$ and  $z$  are represented following Theorem \ref{teorema del orden}. Then, for each $\mu_k^{z}$ there is an unique $\mu^{y}_j=min_{\geq_F}\{\mu_l^y: \mu_l^y\geq_F \mu_k^z \}$ such that 
\bigskip
\begin{center}

\begin{tabular}{|l|l|}\hline
$\displaystyle{\sum_{m:\mu_{m}^{x}\geq_{F}\mu_{k}^{z}}}\alpha_{m}=\sum_{i:\mu_{i}^{x}\geq_{F}\mu_{j}^{y}}\alpha_{i}$& by $\{\mu_l^{y}:\mu_l^{y} \geq_F \mu_j^{y}\}=\{\mu_l^{y}:\mu_l^{y}\geq_F \mu_k^{z}\}$\\\hline
$\displaystyle{\sum_{i:\mu_{i}^{x}\geq_{F}\mu_{j}^{y}}}\alpha_{i} \geq\sum_{l:\mu_{l}^{y}\geq_{F}\mu_{j}^{y}}\beta_{l}$& by  $x\succeq_{F}y$\\\hline
$\displaystyle{\sum_{l:\mu_{l}^{y}\geq_{F}\mu_{j}^{y}}}\beta_{l}=\sum_{l:\mu_{l}^{y}\geq_{F}\mu_{k}^{z}}\beta_{l}$& by  $\{\mu_m^{x}:\mu_m^{x} \geq_F \mu_k^{z}\}=\{\mu_m^{x}:\mu_m^{x}\geq_F \mu_j^{y}\}$\\\hline
$\displaystyle{\sum_{l:\mu_{l}^{y}\geq_{F}\mu_{k}^{z}}}\beta_{l} \geq \sum_{n:\mu_{n}^{z}\geq_{F}\mu_{k}^{z}}\gamma_{n}$& by  $y\succeq_{F}z$\\\hline
\end{tabular}
\end{center}
Hence, for each $\mu^z_{k}$ we have that $$\sum_{m:\mu_{m}^{x}\geq_{F}\mu_{k}^{z}}\alpha_{m} \geq \sum_{n:\mu_{n}^{z}\geq_{F}\mu_{k}^{z}}\gamma_{n}.$$
Therefore,  $x\succeq_{F}z.$

\noindent \textbf{Antisymmetry: If $\boldsymbol{x\succeq_{F}y}$ and $\boldsymbol{y\succeq_{F}x}$, then $\boldsymbol{x=y}$.}

Assume that $x\succeq_{F}y$ and $x \neq y$, then we prove that $y\nsucceq_{F}x$. By definition of $x\succeq_{F}y$ we have that $x\succeq_{f}y$ for each $f\in F$. Since $x\neq y$, then there is at least one $f'\in F$ such that $x\succ_{f'}y$. Hence, by definition of $x\succ_{f'}y$, there is $\mu_j^{y}(f')$ such that $$\sum_{i:\mu_{i}^{x}(f')\geq_{f'}\mu_{j}^{y}(f')}\alpha_{i} > \sum_{l:\mu_{l}^{y}(f')\geq_{f'}\mu_{j}^{y}(f')}\beta_{l}.$$ Then, $y\nsucceq_{f'}x$, which in turns implies that $y\nsucceq_{F}x$.

Therefore, the domination relation $\succeq_{F}$ is a partial order.
\end{proof}

\subsection*{Algorithm 2}
Let $x$ and $y$ be two random stable matchings such that, 
 $$x=\sum_{i=1}^{I}\alpha^0_{i}\mu^{x}_{i} \text{ ~~~~~   and    ~~~~~} y=\sum^J_{j=1}\beta^0_{j}\mu^{y}_{j}.$$
where  $0<\alpha^{0}_i \leq I \text{ for } i=1,\ldots,I,~ 0<\beta^{0}_j \leq J \text{ for }j=1,\ldots,J,\sum^I_{i=1}\alpha^0_{i} =1\text{ and }\sum^J_{j=1}\beta^0_{j}=1.$

Let $I^0=\{1,\ldots,I\}\text{~~~~and~~~~}J^0=\{1,\ldots,J\}.$ Set $\Omega=\emptyset.$
\medskip

\noindent \begin{tabular}{l  l}
\hline \hline
\textbf{Algorithm 2:}& ~~~~~~~~~~~~~~~~~~~~~~~~~~~~~~~~~~~~~~~~~~~~~~~~~~~~~~~~~~~~~~~~~~~~~~~~~~~~~~~~~~~~~~~~~~~~~~~~~~~~~~~~~~~~~~~~~~~~~~~~~~~~~~~~~~~\\
\end{tabular}

\begin{minipage}{0.15\linewidth}

\noindent \textbf{Step $\boldsymbol{k\geq 1}$} \vspace{187pt}

\end{minipage}
\begin{minipage}{0.9\linewidth}

\texttt{IF} $|I^{k-1}|=1$ and $|J^{k-1}|=1$, 

\hspace{30pt} \texttt{THEN}, the procedure stops. 

\hspace{30pt} Set, $\gamma_k=\alpha_1^{k-1}=\beta_1^{k-1}$, $\tilde{\mu}_k^{x}=\mu_{I}^x$, $\tilde{\mu}_k^{y}=\mu_{J}^y.$ 

\hspace{30pt} Set $\Omega=\Omega\cup \{(\gamma_k, \tilde{\mu}_k^{x},\tilde{\mu}_k^{y})\}.$

\texttt{ELSE} ($|I^{k-1}|>1$ or $|J^{k-1}|>1$), the procedure continues as follows: 

\hspace{30pt} Set $\gamma_{k}=min\{\alpha^{k-1}_{1},\beta^{k-1}_{1}\}$.

\hspace{30 pt}\texttt{IF} $\gamma_k \neq \alpha_1 ^{k-1}$, 

\hspace{50 pt}\texttt{THEN}, set $~ I^k:=I^{k-1} $ and 
$
\alpha_{\ell}^{k}:=\left\{
\begin{array}{ll}
\alpha^{k-1}_{1}-\gamma_{k} & \text{~if~} \ell =1\\
\alpha^{k-1}_{\ell} & \text{ if }\ell>1\\
\end{array}
\right.
$,

\hspace{70 pt} for each $\ell \in I^{k-1}.$

\hspace{30 pt}\texttt{ELSE} ($\gamma_k = \alpha_1 ^{k-1}$), set $I^k:=I^{k-1}\setminus max_\ell \{\ell\in I^{k-1}\}$ and $\alpha_{\ell-1}^{k}= \alpha_{\ell}^{k-1}$ 

\hspace{70 pt} for each $\ell \in I^{k-1}.$
\end{minipage}
\begin{minipage}{0.15\linewidth}
\noindent  \vspace{20pt}

\end{minipage}
\begin{minipage}{0.9\linewidth}
\hspace{50 pt}\texttt{IF} $\gamma_k \neq \beta_1 ^{k-1}$,

\hspace{70 pt}\texttt{THEN}, $~ J^k :=J^{k-1} $ and
$
\beta_{\ell}^{k}:=\left\{
\begin{array}{ll}
\beta^{k-1}_{1}-\gamma_{k} & \text{~if~} \ell =1\\
\beta^{k-1}_{\ell} & \text{ if }\ell>1\\
\end{array}
\right.
$,

\hspace{90 pt} for each $\ell \in J^{k-1}.$

\hspace{50 pt}\texttt{ELSE}	($\gamma_k = \beta_1 ^{k-1}$), set $J^k:=J^{k-1}\setminus max_\ell \{\ell\in J^{k-1}\}$ and $\beta_{\ell-1}^{k}= \beta_{\ell}^{k-1}$ 

\hspace{90 pt} for each $\ell \in J^{k-1}.$

\hspace{50 pt} Set $p=|I^0|-|I^{k-1}|$ and $r=|J^0|-|J^{k-1}|$.

\hspace{50 pt} Set $\tilde{\mu}_k^{x}=\mu_{p+1}^x$ and $\tilde{\mu}_k^{y}=\mu_{r+1}^y$.

\hspace{50 pt} Set $\Omega=\Omega\cup \{(\gamma_k, \tilde{\mu}_k^{x},\tilde{\mu}_k^{y})\}$, and continue to Step k+1.
\bigskip

\end{minipage}
\noindent \begin{tabular}{l  l}
\hline \hline
 \textbf{}& ~~~~~~~~~~~~~~~~~~~~~~~~~~~~~~~~~~~~~~~~~~~~~~~~~~~~~~~~~~~~~~~~~~~~~~~~~~~~~~~~~~~~~~~~~~~~~~~~~~~~~~~~~~~~~~~~~~~~~~~~~~~~~~~~~~~~~~~~~~~~~~~~~~~~~~\\
\end{tabular}

\begin{lemma}
Algorithm 2 stops in a finite number of steps. That is, there is a $\tilde{k}$ such that $|I^{\tilde{k}-1}|=|J^{\tilde{k}-1}|=1$ and $\alpha_1^{\tilde{k}}=\beta_1^{\tilde{k}}.$ 
\end{lemma}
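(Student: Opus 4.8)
The plan is to exhibit two quantities controlled along the run of Algorithm 2 — a \textbf{conserved sum} and a \textbf{strictly decreasing cardinality count} — and to read off both the termination and the terminal scalar equality from them. Throughout, write $S^k_I := \sum_{\ell \in I^k}\alpha^k_\ell$ and $S^k_J := \sum_{\ell \in J^k}\beta^k_\ell$, so that $S^0_I = S^0_J = 1$.

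First I would record the elementary but crucial observation that at every non-terminal step $k$ the number $\gamma_k = \min\{\alpha^{k-1}_1,\beta^{k-1}_1\}$ coincides with at least one of $\alpha^{k-1}_1,\beta^{k-1}_1$. Hence at least one of the two \texttt{ELSE} branches (those that delete the maximal index and relabel) is executed, so at least one of $|I^k|,|J^k|$ is strictly smaller than its predecessor while the other is non-increasing; therefore
$$|I^k| + |J^k| \le |I^{k-1}| + |J^{k-1}| - 1.$$
Next I would prove by induction two invariants. \textbf{(Positivity)} Every $\alpha^k_\ell$ and $\beta^k_\ell$ is strictly positive: the only operation that could destroy positivity is $\alpha^k_1 = \alpha^{k-1}_1 - \gamma_k$, which is executed precisely when $\gamma_k \ne \alpha^{k-1}_1$, i.e.\ when $\gamma_k = \beta^{k-1}_1 < \alpha^{k-1}_1$, so the new value is strictly positive; every other operation is a deletion or an index shift, which preserves positivity of the surviving entries, and symmetrically for $\beta$. \textbf{(Conserved sum)} $S^k_I = S^k_J = 1 - \sum_{m=1}^{k}\gamma_m$: in either branch for side $I$ the sum decreases by exactly $\gamma_k$ (either by subtracting $\gamma_k$ from the first term, or by deleting a leading term equal to $\gamma_k$ and relabeling), and the same holds for side $J$.

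Then I would use positivity to show the cardinalities never fall below $1$. The count $|I^k|$ drops only when $\gamma_k = \alpha^{k-1}_1$; but if $|I^{k-1}| = 1$ while $|J^{k-1}| > 1$, the conserved sum gives
$$\alpha^{k-1}_1 = S^{k-1}_I = S^{k-1}_J = \sum_{\ell\in J^{k-1}}\beta^{k-1}_\ell > \beta^{k-1}_1,$$
the strict inequality coming from the positivity of the (at least two) $\beta$-terms; hence $\gamma_k = \beta^{k-1}_1 \ne \alpha^{k-1}_1$ and $I$ is not reduced. The symmetric argument protects $|J^k|$. Combining this with the monovariant, $|I^k| + |J^k|$ is a strictly decreasing integer sequence bounded below by $2$, so after at most $|I^0| + |J^0| - 2$ non-terminal steps it attains the value $2$, i.e.\ $|I^{\tilde k - 1}| = |J^{\tilde k - 1}| = 1$, which triggers the stopping rule. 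Finally, at that configuration the conserved sum yields $\alpha^{\tilde k - 1}_1 = S^{\tilde k -1}_I = S^{\tilde k -1}_J = \beta^{\tilde k -1}_1$, which is the asserted equality (the stopping assignment sets $\gamma_{\tilde k}$ equal to this common value).

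The step I expect to be the main obstacle is the interlock in the third paragraph: the guarantee that a cardinality-one side is never deleted relies essentially on the \emph{strict} positivity of the scalars together with the conserved sum. One must thread the invariants through the induction in the correct order — establish positivity and the conserved sum first, then the no-underflow claim, and only then termination — since the no-underflow argument consumes both earlier invariants. The monovariant and the conserved-sum bookkeeping are otherwise routine case checks over the two branches.
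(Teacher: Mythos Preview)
Your proof is correct and follows essentially the same route as the paper's: both establish the conserved-sum identity $\sum_{\ell\in I^k}\alpha^k_\ell=\sum_{\ell\in J^k}\beta^k_\ell=1-\sum_{t\le k}\gamma_t$, observe that at least one of $|I^k|,|J^k|$ strictly decreases at every non-terminal step, and use the conserved sum together with strict positivity to show that a side of cardinality one is protected from deletion while the other side still has cardinality larger than one. Your presentation is slightly more explicit---you isolate positivity as a stated invariant and package termination via the single monovariant $|I^k|+|J^k|$, which also yields the bound $\tilde k\le I+J-1$---but the mathematical content is the same.
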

\begin{proof}
Note that in each step of Algorithm 2, we have that $|I^k|=|I^{k-1}|-1$ or $|J^k|=|J^{k-1}|-1.$ 
We also have that in each Step $k$ of the algorithm,
$$
\sum_{\ell\in I^{k}}\alpha_\ell^{k}=\sum_{\ell\in I^{k-1}}\alpha_\ell^{k-1}-\gamma_{k}~~~~\text{and}~~~\sum_{\ell\in J^{k}}\beta_\ell^{k}=\sum_{\ell\in J^{k-1}}\beta_\ell^{k-1}-\gamma_{k}.
$$
Hence, $$\sum_{\ell\in I^{k}}\alpha_\ell^{k}=\sum_{\ell\in I^{0}}\alpha_\ell^{0}-\sum_{t=1}^{k}\gamma_t=1-\sum_{t=1}^{k}\gamma_t. $$
Similarly, $$\sum_{\ell\in J^{k}}\beta_\ell^{k}=\sum_{\ell\in J^{0}}\beta_\ell^{0}-\sum_{t=1}^{k}\gamma_t=1-\sum_{t=1}^{k}\gamma_t. $$ 
That is, for each $k$ we have that
\begin{equation}\label{1 cuenta lema}
\sum_{\ell\in I^{k}}\alpha_\ell^{k}=\sum_{\ell\in J^{k}}\beta_\ell^{k}=1-\sum_{t=1}^{k}\gamma_t.
\end{equation}

By the finiteness of the sets $I^0$ and $J^0$, and given that in each step of Algorithm 2 we have that $|I^k|=|I^{k-1}|-1$ or $|J^k|=|J^{k-1}|-1$, we claim that  there is a $\tilde{k}$ such that $|I^{\tilde{k}-1}|=|J^{\tilde{k}-1}|=1.$
Assume that there is a Step $k_1-1$ such that $|I^{k_1-1}|=1\text{ and }|J^{k_1-1}|>1.$ By equality (\ref{1 cuenta lema}), we have that $\alpha_1^{k_1-1}= \sum_{\ell \in J^{k_1-1}}\beta_\ell^{k_1-1}.$ Hence. $\alpha_1^{k_1-1}> \beta_\ell^{k_1-1}$ for each $\ell \in J^{k_1-1}$ and $|I^{k_1}|=|I^{k_1-1}|$. Thus, $\alpha_{k_1}=\alpha_1^{k_1-1}-\gamma_{k_1}=\alpha_1^{k_1-1}-\beta_1^{k_1-1}$ and $J^{k_1}=J^{k_1-1}\setminus  max_\ell \{\ell\in J^{k_1-1}\}$, and $\beta_{\ell}^{k_1}= \beta_{\ell+1}^{k_1-1}$ for each $\ell \in J^{k_1}.$ Then, $|I^{k_1-1}|=|I^{k_1}|=1\text{ and }|J^{k_1}|=|J^{k_1-1}|-1 \geq 1.$ If $|J^{k_1}|>1$, then proceed with Algorithm 2 until there is a step $\tilde{k}$ such that $|I^{\tilde{k}-1}|=|J^{\tilde{k}-1}|=1$ and the procedure stops. Therefore, by equality (\ref{1 cuenta lema}), we have that $\alpha_1^{\tilde{k}}=\beta_1^{\tilde{k}}=\gamma_{\tilde{k}}.$
\end{proof}
\medskip

\noindent \begin{proof}[Proof of Proposition \ref{proposicion reescribir con algoritmo}]First we prove that there is $k_1$ such that $\alpha_1^{0}=\sum_{t=1}^{k_1}\gamma_t.$ Since $\gamma_1=min\{\alpha_1^0,\beta_1^0 \}$, we analyze two cases.
\begin{enumerate}[Case (i)]
\item[\textbf{Case 1:}] $\gamma_1=\alpha_1^0$. In this case $k_1=1$. 
\item[\textbf{Case 2:}] $\gamma_1<\alpha_1^0.$ In this case we have that $|I^0|=|I^1|$ and $\alpha_1^1=\alpha_1^0-\gamma_1.$ Then, in the next step, $\gamma_2 \leq \alpha_1^1.$

If $\gamma_2=\alpha_1^1$, then $\alpha_1^0=\gamma_1 + \gamma_2$.
 
If $\gamma_2<\alpha_1^1$, then repeat this procedure until $k_1$ is found such that $\gamma_{k_1}=\alpha_1^{k_1-1}$. Then $\alpha_1^0=\sum_{t=1}^{k_1}\gamma_t $. Note that $|I^0|=|I^1|=\ldots=|I^{k_1}|.$ Then, we have that $\tilde{\mu}^x_t=\mu^x_1$  for $t=1,\ldots,k_1$ and 
$$\sum_{t=1}^{k_1}\gamma_t \tilde{\mu}^{x}_t=\sum_{t=1}^{k_1}\gamma_t \mu^x_1 =\alpha_1^0 \mu^x_1.$$

Notice that $|I^{k_1}|=|I^{{k_1}-1}|-1$. That is, $1=p=|I^0|-|I^{k_1}|$ and $\tilde{\mu}_{k_1+1}^x=\mu_2^x.$ Then, for each $k\geq k_1+1$ we have that $\tilde{\mu}_k^x \neq \mu_1^x.$

Once we find $k_1$, we have to repeat this procedure with each  $\alpha_\ell^0$ for $\ell\geq2.$

\end{enumerate}

The case for $\beta$ is similar.
\end{proof}

We illustrate Algorithm 2 with two random matchings of Example 1.

\medskip

\noindent \textbf{Example 1 (Continued)} \textit{Let $x=\frac{1}{4}x^{\nu_1}+\frac{1}{2}x^{\nu_2}+\frac{1}{4}x^{\nu_4}$ and $y=\frac{1}{6}x^{\nu^1}+\frac{1}{2}x^{\nu^3}+\frac{1}{3}x^{\nu^4}$. Notice that both random stable matchings are represented as in Theorem \ref{teorema del orden}. We use Algorithm 2 to change their representation.
Let $I^0=\{1,2,3\}$ and $J^0=\{1,2,3\}$. Set $\Omega=\emptyset.$}
\begin{itemize}
\item[\textit{\textbf{Step 1}}]\textit{Since $I^0=\{1,2,3\}$ and $J^0=\{1,2,3\}$, set $\gamma_1=min\{\frac{1}{4},\frac{1}{6}\}=\frac{1}{6},$}

\begin{minipage}{0.5\linewidth}
\hspace{60pt}$
\begin{array}{|l}
\alpha_1^1=\frac{1}{4}-\frac{1}{6}=\frac{1}{12}  \\
\alpha_2^1=\frac{1}{2} \\
\alpha_3^1=\frac{1}{4} \\
\end{array}
$
\end{minipage}
\begin{minipage}{0.5\linewidth}
$
\begin{array}{|l}
 \beta_1^1=\frac{1}{2}\\
 \beta_2^1=\frac{1}{3}\\
\end{array}
$
\end{minipage}

\textit{Then, $I^1=\{1,2,3\}$, $J^1=\{1,2\}$, $\tilde{\mu}^x_1=\nu_1$ and $\tilde{\mu}^y_1=\nu_1$.
Set $\Omega=\Omega \cup \{(\nu_1,\nu_1,\frac{1}{6})\}$ and continue to Step 2.}

\item[\textit{\textbf{Step 2}}]\textit{Since $I^1=\{1,2,3\}$, $J^1=\{1,2\}$, set $\gamma_2=min\{\frac{1}{12},\frac{1}{2}\}=\frac{1}{12},$}

\begin{minipage}{0.5\linewidth}
\hspace{60pt}$
\begin{array}{|l}
\alpha_1^2=\frac{1}{2}\\
\alpha_2^2=\frac{1}{4} \\
\end{array}
$
\end{minipage}
\begin{minipage}{0.5\linewidth}
$
\begin{array}{|l}
 \beta_1^2=\frac{1}{2}-\frac{1}{12}=\frac{5}{12} \\
 \beta_2^2=\frac{1}{3}\\
 \end{array}
$
\end{minipage}

\textit{Then, $I^2=\{1,2\}$, $J^2=\{1,2\}$, $\tilde{\mu}^x_2=\nu_1$ and $\tilde{\mu}^y_2=\nu_3$. Set $\Omega=\Omega \cup \{(\nu_1,\nu_2,\frac{1}{12})\}$ and continue to Step 3.}

\item[\textit{\textbf{Step 3}}] \textit{Since $I^2=\{1,2\}$, $J^2=\{1,2\}$, set $\gamma_3=min\{\frac{1}{2},\frac{5}{12}\}=\frac{5}{12},$}

\begin{minipage}{0.5\linewidth}
\hspace{60pt}$
\begin{array}{|l}
\alpha_1^3=\frac{1}{2} -\frac{5}{12}=\frac{1}{12}\\
\alpha_2^3=\frac{1}{4} \\
\end{array}
$
\end{minipage}
\begin{minipage}{0.5\linewidth}
$
\begin{array}{|l}
 \beta_1^3=\frac{1}{4} \\
 \end{array}
$
\end{minipage}

\textit{Then, $I^3=\{1,2\}$, $J^3=\{1\}$, $\tilde{\mu}^x_3=\nu_2$ and $\tilde{\mu}^y_3=\nu_3$. Set $\Omega=\Omega \cup \{(\nu_2,\nu_3,\frac{5}{12})\}$ and continue to Step 4.}

\item[\textit{\textbf{Step 4}}]\textit{Since $I^3=\{1,2\}$, $J^3=\{1\}$, $\tilde{\mu}^x_3=\nu_2$, set $\gamma_4=min\{\frac{1}{12},\frac{1}{3}\}=\frac{1}{12},$ }

\begin{minipage}{0.5\linewidth}
\hspace{60pt}
$
\begin{array}{|l}
\alpha_1^4=\frac{1}{4} \\
\end{array}
$
\end{minipage}
\begin{minipage}{0.5\linewidth}
$
\begin{array}{|l}
 \beta_1^4=\frac{1}{3}-\frac{1}{12}=\frac{1}{4} \\
\end{array}
$
\end{minipage}

\textit{Then, $I^4=\{1\}$, $J^4=\{1\}$, $\tilde{\mu}^x_4=\nu_2$ and $\tilde{\mu}^y_4=\nu_4$. Set $\Omega=\Omega \cup \{(\nu_2,\nu_4,\frac{1}{12})\}$ and continue to Step 5.}

\item[\textit{\textbf{Step 5}}]\textit{Since $I^4=\{1\}$, $J^4=\{1\}$, then the procedure stops. Set $\gamma_6=min\{\frac{1}{4},\frac{1}{4}\}=\frac{1}{4},~\tilde{\mu}^x_6=\nu_4$ and $\tilde{\mu}^y_6=\nu_4$. Set $\Omega=\Omega \cup \{(\nu_4,\nu_4,\frac{1}{4})\}$ }
\end{itemize}
\textit{Therefore, we can represent the random stable matchings $x$ and $y$ as follows: }
\begin{center}
$
x=\frac{1}{6}x^{\nu_1}+\frac{1}{12}x^{\nu_1}+\frac{5}{12}x^{\nu_2}+\frac{1}{12}x^{\nu_2}+\frac{1}{4}x^{\nu_4},
$
\end{center}
\begin{center}
$y=\frac{1}{6}x^{\nu_1}+\frac{1}{12}x^{\nu_3}+\frac{5}{12}x^{\nu_3}+\frac{1}{12}x^{\nu_4}+\frac{1}{4}x^{\nu_4}.
$
\end{center}

Observe that $x$ and $y$ have five terms in each representation. Moreover, both lotteries have the same scalar, term to term.  
 \medskip
 
\begin{proof}[Proof of Proposition \ref{equivalencia de ordenes}]

\noindent $\boldsymbol{(\Longrightarrow)}$ Let $x$ and $y$ be two random stable matchings represented after the splitting procedure. Assume that $x\succeq_F y$. Fix $f\in F.$ We prove that $x\succeq_f^S y$. That is, $\tilde{\mu}^x_\ell (f)\geq_f \tilde{\mu}^{y}_\ell(f)$ for each $\ell=1,\ldots,\tilde{k}$.  

If $\tilde{\mu}_1^{y}(f)>_f \tilde{\mu}^{x}_1(f)$, we have that 
$$
0=\sum_{\ell:\tilde{\mu}^x_\ell (f)\geq_f \tilde{\mu}^{y}_1(f)} \gamma_\ell\geq \sum_{\ell:\tilde{\mu}^y_\ell (f)\geq_f\tilde{\mu}^{y}_1(f)} \gamma_\ell=\gamma_1>0,
$$
which is a contradiction. Then, $\tilde{\mu}_1^{x}(f)\geq_f \tilde{\mu}^{y}_1(f)$.
Assume that there is $k_1\leq \tilde{k}$ such that for each $\ell<k_1$ we have that $\tilde{\mu}_\ell^{x}(f) \geq_f\tilde{\mu}_\ell^{y}(f)$, and $\tilde{\mu}_{k_1}^{x}(f) <_f \tilde{\mu}_{k_1}^{y}(f)$.

Note that $\tilde{\mu}^{y}_\ell (f)\geq_f \tilde{\mu}_{\ell+1}^{y}(f)$ for each $\ell=1,\ldots,\tilde{k}-1$ implies that 

\begin{equation} \label{ecuacion para prop 2(1)}
\sum_{\ell=1}^{k_1}\gamma_\ell=\sum_{\ell:\tilde{\mu}^{y}_\ell (f)\geq_f \tilde{\mu}^{y}_{k_1}(f)} \gamma_\ell.
\end{equation}
By hypothesis ($x\succeq_F y$), in particular for $w=\tilde{\mu}_{k_1}^{y}(m)$ we have that 
$$
\sum_{\ell:\tilde{\mu}^y_\ell (f)\geq_f \tilde{\mu}^{y}_{k_1}(f)} \gamma_\ell \leq \sum_{\ell:\tilde{\mu}^x_\ell (f)\geq_f \tilde{\mu}^{y}_{k_1}(f)} \gamma_\ell.
$$
Notice that for $k_1,$ we have that $\tilde{\mu}_{k_1-1}^{x} (f)\geq_f \tilde{\mu}_{k_1-1}^{y}(f)$ and $\tilde{\mu}_{k_1}^{x}(f) <_f \tilde{\mu}_{k_1}^{y}(f)$. Then, 
$\tilde{\mu}_{k_1-1}^{x}(f)\geq_f \tilde{\mu}_{k_1-1}^{y}(f)\geq_f \tilde{\mu}_{k_1}^{y}(f) >_f \tilde{\mu}_{k_1}^{x}(f).$ 
Hence, 

\begin{equation}\label{ecuacion para prop 2(2)}
\sum_{\ell:\tilde{\mu}^x_\ell (f)\geq_f \tilde{\mu}^{y}_{k_1}(f)} \gamma_\ell = \sum_{\ell:\tilde{\mu}^x_\ell (f)\geq_f \tilde{\mu}^{x}_{k_1-1}(f)} \gamma_\ell=\sum_{\ell=1}^{k_1-1}\gamma_\ell.
\end{equation}
Thus, by equalities (\ref{ecuacion para prop 2(1)}) and (\ref{ecuacion para prop 2(2)}), we have that $\sum_{\ell=1}^{k_1}\gamma_\ell\leq \sum_{\ell=1}^{k_1-1}\gamma_\ell$, and this is a contradiction since $\gamma_{k_1}>0$. Then, there is no $k_1$ such that  for each $\ell<k_1$ we have that $\tilde{\mu}_\ell^{x}(f) \geq_f \tilde{\mu}_\ell^{y}(f)$, and $\tilde{\mu}_{k_1}^{x}(f) <_f \tilde{\mu}_{k_1}^{y}(f)$. Thus, $\tilde{\mu}_\ell^{x}(f) \geq_f \tilde{\mu}_\ell^{y}(f)$ for each $\ell=1,\ldots,\tilde{k},$  which in turns implies that $x\succeq_F^S y$.
\medskip
 
\noindent $\boldsymbol{(\Longleftarrow)}$ Recall that both $x$ and $y$ are represented by the splitting procedure. That is, both representations have the same numbers of terms and the same scalar term to term. Moreover, $\tilde{\mu}^{x}_\ell \geq_F \tilde{\mu}^{x}_{\ell+1}$ and $\tilde{\mu}^{y}_\ell \geq_F \tilde{\mu}^{y}_{\ell+1}$ for each $\ell=1,\ldots,\tilde{k}-1$. Also, since $x\succeq ^{S}_F y$, then $\tilde{\mu}^{x}_\ell \geq_F \tilde{\mu}^{y}_\ell$ for each $\ell=1,\ldots,\tilde{k}$.  Fix $\ell'$, then $$\{\gamma_\ell :\tilde{\mu}^{x}_\ell \geq_F \tilde{\mu}^{x}_{\ell'}\}=\{\gamma_\ell :\tilde{\mu}^{y}_\ell \geq_F \tilde{\mu}^{y}_{\ell'}\}\subseteq \{\gamma_\ell :\tilde{\mu}^{x}_\ell \geq_F \tilde{\mu}^{y}_{\ell'}\}.$$
Hence,   $$\sum_{\ell:\tilde{\mu}^{y}_\ell(f) \geq_f \tilde{\mu}^{y}_{\ell'}(f)} \gamma_\ell \leq \sum_{\ell:\tilde{\mu}^{x}_\ell (f)\geq_f \tilde{\mu}^{y}_{\ell'}(f)} \gamma_\ell$$
for each $f\in F$ and for each $\ell'=1,\ldots,\tilde{k}$.
Then, $x\succeq_F y$.

Therefore, the partial order $\succeq_F$ is equivalent to the domination relation $\succeq_F^S$.
\end{proof}
\medskip

\begin{proof}[Proof of Proposition \ref{teorema de operaciones binarias}]
We prove that $\boldsymbol{x\veebar_X y =\textbf{\textit{l.u.b.}}_{\succeq_X}(x,y)}$. Recall that by Proposition \ref{equivalencia de ordenes}, $x \succeq_F y$ if and only if $x\succeq_F^S y$ (analogously for $\succeq_W$ and $\succeq_W^S$).
\begin{enumerate}[(i)]

\item $\boldsymbol{x\veebar_X y \succeq_X x :}$

Since $\tilde{\mu}^{x}_{\ell}\vee_X \tilde{\mu}^{y}_{\ell} \geq_X \tilde{\mu}^{x}_{\ell}$ for each $\ell=1,\ldots,\tilde{k}$, then $x\veebar_X y \succeq^S_X x.$ Hence, $x\veebar_X y \succeq_X x.$

\item $\boldsymbol{x\veebar_X y \succeq_X y :}$

Since $\tilde{\mu}^{x}_{\ell}\vee_X \tilde{\mu}^{y}_{\ell} \geq_X \tilde{\mu}^{y}_{\ell}$ for each $\ell=1,\ldots,\tilde{k}$, then $x\veebar_X y \succeq^S_X y.$ Hence, $x\veebar_X y \succeq_X y.$

\item \textbf{If $\boldsymbol{z \succeq_X x}$  and  $\boldsymbol{z\succeq_X y}$, then $\boldsymbol{z \succeq_X x\veebar y}$:}

We have that $\tilde{\mu}^{z}_{\ell} \geq_X \tilde{\mu}^{x}_{\ell}$ and $\tilde{\mu}^{z}_{\ell} \geq_X \tilde{\mu}^{y}_{\ell}$ for each $\ell=1,\ldots,\tilde{k}$. Since, $\tilde{\mu}^{x}_{\ell}\vee_X \tilde{\mu}^{y}_{\ell}$ is the $\textit{l.u.b.}_{\geq_X}(\tilde{\mu}^{x}_{\ell}, \tilde{\mu}^{y}_{\ell}$), then $\tilde{\mu}^{z}_{\ell} \geq_X \tilde{\mu}^{x}_{\ell}\vee_X \tilde{\mu}^{y}_{\ell}$ for each $\ell=1,\ldots,\tilde{k}$. Hence, $z \succeq^S_X x\veebar_X y$. Therefore, $z \succeq_X x\veebar_X y$.
\end{enumerate}
The proof for $\boldsymbol{x\barwedge_X y =\textbf{\textit{g.l.b.}}_{\succeq_X}(x,y)}$ is analogous.

To prove that $\boldsymbol{x\veebar_F y =x\barwedge_W y},$ recall that the lattices of stable matchings are dual, that is, given $\mu_1,~\mu_2 \in \mathcal{S(P)}$  $\mu_1\vee_F\mu_2= \mu_1\wedge_W \mu_2$. 
By Definition \ref{defino operaciones binarias}, we have that if $0< \gamma_{\ell} \leq 1$, $\sum_{\ell=1}^{\tilde{k}}\gamma_{\ell}=1$, $\tilde{\mu}^{x}_{\ell}\in{\mathcal{S(P)}}$ , $\tilde{\mu}^{x}_{\ell} \geq_{F} \tilde{\mu}^{x}_{\ell+1}$ and $\tilde{\mu}^{y}_{\ell} \geq_{F} \tilde{\mu}^{y}_{\ell+1}$, then
$$x\veebar_F y= \sum_{\ell=1}^{\tilde{k}}\gamma_{\ell}(\tilde{\mu}^{x}_{\ell}\vee_F \tilde{\mu}^{y}_{\ell})=\sum_{\ell=1}^{\tilde{k}}\gamma_{\ell}(\tilde{\mu}^{x}_{\ell}\wedge_W \tilde{\mu}^{y}_{\ell})=x\barwedge_W y.$$
 The proof for $\boldsymbol{x\veebar_W y =x\barwedge_F y}$  is analogous.
\end{proof}
\end{document}